  \providecommand\BibTeX{{%
    \normalfont B\kern-0.5em{\scshape i\kern-0.25em b}\kern-0.8em\TeX}}}
\newcommand{\rvb}[1]{\textcolor{black}{{}#1}}
\newcommand{\defenseName}{DualGuard}
\begin{document}
\fancyhead{}

%-----------------------------------------------------------------------------------------
%%
%% The "title" command has an optional parameter,
%% allowing the author to define a "short title" to be used in page headers.
\title{When the Differences in Frequency Domain are Compensated: Understanding and Defeating Modulated Replay Attacks on Automatic Speech Recognition}

%-----------------------------------------------------------------------------------------
%%
%% The "author" command and its associated commands are used to define
%% the authors and their affiliations.
%% Of note is the shared affiliation of the first two authors, and the
%% "authornote" and "authornotemark" commands
%% used to denote shared contribution to the research.

\author{Shu Wang$^{1}$, Jiahao Cao$^{1,2}$, Xu He$^{1}$, Kun Sun$^{1}$, Qi Li$^{2}$}
\affiliation{\institution{$^{1}$Department of Information Sciences and Technology, CSIS, George Mason University}}
\affiliation{\institution{$^{2}$Institute for Network Sciences and Cyberspace, Tsinghua University; BNRist}}
%\affiliation{\institution{$^{3}$Institute for Network Sciences and Cyberspace, Tsinghua University; BNRist}}
\email{{swang47,xhe6,ksun3}@gmu.edu, caojh15@mails.tsinghua.edu.cn,  qli01@tsinghua.edu.cn}

\begin{abstract}

Automatic speech recognition (ASR) systems have been widely deployed in modern smart devices to provide convenient and diverse voice-controlled services. Since ASR systems are vulnerable to audio replay attacks that can spoof and mislead ASR systems, a number of defense systems have been proposed to identify replayed audio signals based on the speakers' unique acoustic features in the frequency domain. In this paper, we uncover a new type of replay attack called \emph{modulated replay attack}, which can bypass the existing frequency domain based defense systems. The basic idea is to compensate for the frequency distortion of a given electronic speaker using an inverse filter that is customized to the speaker's transform characteristics. Our experiments on real smart devices confirm the modulated replay attacks can successfully escape the existing detection mechanisms that rely on identifying suspicious features in the frequency domain. To defeat modulated replay attacks, we design and implement a countermeasure named \emph{DualGuard}. We discover and formally prove that no matter how the replay audio signals could be modulated, the replay attacks will either leave \emph{ringing artifacts} in the time domain or cause \emph{spectrum distortion} in the frequency domain. Therefore, by jointly checking suspicious features in both frequency and time domains, DualGuard~can successfully detect various replay attacks including the modulated replay attacks. We implement a prototype of DualGuard~on a popular voice interactive platform, ReSpeaker Core v2. The experimental results show DualGuard~can achieve 98\% accuracy on detecting modulated replay attacks.

\end{abstract}

%-----------------------------------------------------------------------------------------
%%
%% The code below is generated by the tool at http://dl.acm.org/ccs.cfm.
%% Please copy and paste the code instead of the example below.
%%
\begin{CCSXML}
<ccs2012>
<concept>
<concept_id>10002978.10003001</concept_id>
<concept_desc>Security and privacy~Security in hardware</concept_desc>
<concept_significance>500</concept_significance>
</concept>
<concept>
<concept_id>10003120.10003121</concept_id>
<concept_desc>Human-centered computing~Human computer interaction (HCI)</concept_desc>
<concept_significance>300</concept_significance>
</concept>
</ccs2012>
\end{CCSXML}

\ccsdesc[500]{Security and privacy~Security in hardware}
\ccsdesc[300]{Human-centered computing~Human computer interaction (HCI)}

%-----------------------------------------------------------------------------------------
%%
%% Keywords. The author(s) should pick words that accurately describe
%% the work being presented. Separate the keywords with commas.
\keywords{modulated replay attack; automatic speech recognition; ringing artifacts; frequency distortion}

%% A "teaser" image appears between the author and affiliation
%% information and the body of the document, and typically spans the
%% page.
%\begin{teaserfigure}
%  \includegraphics[width=\textwidth]{sampleteaser}
%  \caption{Seattle Mariners at Spring Training, 2010.}
%  \Description{Enjoying the baseball game from the third-base
%  seats. Ichiro Suzuki preparing to bat.}
%  \label{fig:teaser}
%\end{teaserfigure}
%-----------------------------------------------------------------------------------------

%%
%% This command processes the author and affiliation and title
%% information and builds the first part of the formatted document.
\maketitle
%-----------------------------------------------------------------------------------------
\section{Introduction}

Automatic speech recognition (ASR) \rvb{has been a ubiquitous technique widely used in human-computer interaction systems, such as Google Assistant~\cite{GoogleAssistant}, Amazon Alexa~\cite{Alexa}, Apple Siri~\cite{Siri},  Facebook Portal~\cite{FacebookPortal}, and Microsoft Cortana~\cite{MicroCortana}. With advanced ASR techniques, these systems take voice commands as inputs and act on them to provide diverse voice-controlled services. People now can directly  use voice to unlock mobile phone~\cite{voicelock1,voicelock2}, send private messages~\cite{IEMI}, log in to mobile apps~\cite{voicelogin}, make online payments~\cite{voicepay}, activate smart home devices~\cite{voicecontrol}, and unlock a car door~\cite{voiceunlockcar}.   } 

Although ASR provides many benefits and conveniences, recent studies have found a number of attacks that can effectively spoof and mislead ASR systems~\cite{Dolphin, inaudible_0, InaudibleVC,HidCmd,CmdSong,Psychoacoustic,SkillSquatting, DangerousSkills,AfterASR,ASV2017,subbass,replayattack}. One of the most powerful and practical attacks is the audio replay attack~\cite{ASV2017,subbass,replayattack}, where a pre-recorded voice sample collected from a genuine victim is played back to spoof ASR systems. Consequently, it can easily bypass voice authentication and inject voice commands to conduct malicious activities~\cite{jain2006biometrics}. For example, a mobile device can be unlocked by simply replaying a pre-recorded voice command of its owner~\cite{ASV2017}. Even worse, the audio replay attack can be easily launched by anyone without specific knowledge in speech processing or other computer techniques. Also, the prevalence of portable recording devices, especially smartphones, makes audio replay attacks one of the most practical threats to ASR systems.

To defeat audio replay attacks, researchers have proposed a number of mechanisms to detect abnormal frequency features of audio signals, such as Linear Prediction Cepstral Coefficient (LPCC)~\cite{Lav2017}, Mel Frequency Cepstral Coefficient (MFCC)~\cite{MFCC}, Constant Q Cepstral Coefficients (CQCC)~\cite{CQCC}, and Mel Wavelet Packet Coefficients (MWPC)~\cite{LPCC}. A recent study~\cite{freq_solution} shows that the amplitude-frequency characteristics in a high-frequency sub-band will change significantly under the replay attack, and thus they can be leveraged to detect the attack. Another study~\cite{subbass} discovers that the signal energy in the low-frequency sub-bands can also be leveraged to distinguish if the voice comes from a human or an electronic speaker. %However, this method is invalid in noisy environments because of the difficulty to determine if the sub-bass energy results from an electronic speaker or ambient noise. 
%\textcolor{red}{Add more defense related to frequency.} 
Moreover, due to the degraded amplitude components caused by the replay noise, the frequency modulation features~\cite{FM1, FM2, ModDynamic} can be leveraged into detection. \rvb{Overall, existing countermeasures are effective on detecting all known replay attacks by checking suspicious features in the frequency domain.}
%VoiceLive utilizes the dynamic of phoneme localization to verify if the voice sounded from a live human~\cite{VoiceLive}. But the authorization device should have multiple microphones and be close to the user.

%\textcolor{red}{Notes: Our attack actually cannot bypass the sub-bass range defense. Moreover, some defense systems, like voicelive, are not frequency related. We need to discuss how to well handle this.}

In this paper, we present a new replay attack named \emph{modulated replay attack}, which can generate replay audios with \rvb{almost} the same frequency spectrum as human voices to bypass the existing countermeasures. Inspired by the loudspeaker equalization techniques in auditory research that targets at improving the sound effect of an audio system~\cite{Equalization0}, the core idea of modulated replay attack is to compensate for the differences in the frequency domain between replay audios and human voices. Through a measurement study on ASR systems, we find the differences in the frequency domain are caused by the playback electronic speakers, which typically have a non-flat frequency response with non-regular oscillations in the passband. %\rvb{In other words,} if two signals of different frequencies pass through a speaker respectively, their amplification factors \rvb{(gains)} are usually different. %Actually, a human voice signal consists of multiple frequency components. 
In reality, an speaker can hardly output all frequencies with equal power due to its mechanical design and the crossover nature if the speaker possesses more than one driver~\cite{Equalization1}. \rvb{Thus,} when the genuine human audio is replayed, \rvb{electronic speakers exert} different spectral gains on the frequency spectrum of the replay audio, leading to \rvb{different distortion degrees}. Typically, electronic speakers suppress the low-frequency components and enhance the high-frequency components \rvb{of the genuine human audio}.

% In this paper, we propose modulated replay attack 
By evaluating the transfer characteristic of electronic speakers, we are able to  customize a pre-processing inverse filter for any given speaker. By applying the inverse filter before replaying the human audio, the spectral effects caused by the speaker devices can be offset. Consequently, the attacker can produce spoofed audios that are difficult to be distinguished from real human voices in the frequency domain. We conduct experiments to demonstrate the feasibility and effectiveness of the modulated replay attack against 8 existing replay detection mechanisms using 6 real speaker devices. The experimental results show that the detection accuracy of most frequency-based countermeasure significantly drops from above 90\% to around 10\% under our attack, and even the best countermeasure using MWPC~\cite{LPCC} drops from above 97\% to around 50\%. 
One major reason is that modulated replay attack is a new type of attack that leverages loudspeaker frequency response compensation. %\sw{Moreover, with the compensation principle, the modulated replay attack can provide an idea for further ultrasonic or laser attacks to make themselves  harder to be detected.}

%our modulated replay audios have almost the same frequency spectrum as human voices. The Euclidean distance of their frequency spectrum distributions can always achieve less than $1.6 \times 10^{-3}$. 

%can bypass the existing replay detection methods, exposing the vulnerability of the frequency-based defenses. \rvr{Should add more}

% Dual-domain defense -> Gibbs artifacts -> extrema ratio to describe the time-domain features.
To \rvb{defeat} the modulated replay attack \rvb{as well as classical replay attacks}, we propose a new dual-domain defense method \rvb{named \emph{\defenseName}} that \rvb{cross-checks suspicious features in both time domain and frequency domain, which is another major  contribution in this paper. The key insight of our defense is that it is inevitable for any replay attacks to either leave \emph{ringing artifacts}~\cite{Ringing} in the time domain or cause \emph{spectrum distortion} in the frequency domain, even if the replay audio signals have been modulated. We formally prove the correctness and universality of our key insight. In the time domain,} 
%the spectral features and waveform features. 
%For attackers, the modulated replay audio is generated by the spectrum transform with the inverse filter. However, this spectrum filtering in the frequency domain will lead to the Gibbs phenomenon, which refers to the small ringing artifacts generated in the corresponding time-domain signals~\cite{GibbsPh}. 
ringing artifacts will cause spurious oscillations, which  \rvb{generate a large} number of local \rvb{extreme points in replay audio waveforms. \defenseName~extracts and leverages those local extrema patterns to train a Support Vector Machine (SVM) classifier that distinguishes modulated replay attacks from human voices. In the frequency domain, spectrum distortion will generate dramatically different power spectrum distributions compared to human voices. Also, \defenseName~ applies the area under the CDF curve (AUC) of power spectrum distributions to filter out classical replay attacks.} Therefore, \defenseName~can effectively identify replay audio by performing the checks in two domains.
%A voice signal must pass \defenseName{}'s \rvb{detection in both time and frequency domains} before it can be accepted by ASR systems. 

\rvb{We implement a prototype of \defenseName~on a voice interactive platform, ReSpeaker Core v2~\cite{ReSpeaker}. We conduct extensive experiments to evaluate its effectiveness and performance on detecting replay attacks. The experimental results show that \defenseName~can achieve about 98\% detection accuracy against the modulated replay attack and over 90\% detection accuracy against classical replay attacks. %To verify that \defenseName~can work on various recording devices that have different sampling rates, we explore the detection accuracy under three popular settings of sampling rates, i.e., 96 kHz, 48 kHz, and 44.1 kHz. \defenseName~can always achieve more than 85\% accuracy with these sampling rates. 
Moreover, we show that \defenseName~works well under different noisy environments. Particularly, the detection accuracy only decreases by 3.2\% on average even with a bad signal-to-noise ratio (SNR) of 40 dB. \defenseName~is lightweight and can be deployed to work online in real ASR systems. For example, our testbed platform takes 5.5 $ms$ on average to process a signal segment of 32 $ms$ length using 24.2\% CPU and 12.05 MB memory.} 

%Our experiments show that the total detection accuracy can reach 88.72\%, which enhances the security of ASR systems.

% Contributions
\rvb{In summary, our paper makes the following contributions:}
\vspace{-0.05in}

\begin{itemize} [leftmargin=*]
    \item We propose a new modulated replay attack against ASR systems, utilizing a specific software-based inverse filter to offset suspicious features in the frequency domain. By compensating the electronic speaker's non-flat frequency response in the passband, modulated replay attacks can bypass existing replay detection mechanisms.

%    \item \rvb{We uncover and prove a key insight that is universal and can be used to detect both modulated replay attacks and classical replay attacks.}
    
    \item We design a novel defense system named \defenseName~to detect all replay attacks including the modulated replay attacks by checking suspicious features in both frequency domain and time domain. We formally prove that replay attacks cannot escape from being detected in both time and frequency domains.

    %\item Considering the correlation between frequency domain and time domain, we propose a dual-domain defense solution against the modulated replay attack, which combines frequency-domain power spectrum features and time-domain local extrema features.
    
    \item \rvb{We verify the feasibility and effectiveness of the modulated replay attack through real experiments using multiple speaker devices over existing replay detection mechanisms. We also implement a prototype of \defenseName~on a popular voice platform and demonstrate its effectiveness and efficiency in detecting all replay attacks.}
    
    %\item We conduct experiments on real voice signals with different speaker/receiver devices, verifying the effectiveness of the modulated replay attack and our proposed dual-domain defense.
\end{itemize}

%\noindent \textbf{Roadmap.} The rest of the paper is organized as follows. Section~\ref{sec:back} introduces background on audio signal processing and replay attacks on ASR systems. Section~\ref{sec:attack} presents modulated replay attacks. Section~\ref{sec:defense} proposes a dual-domain detection system called \defenseName~ against both classic replay attacks and modulated replay attacks.  Section~\ref{sec:eval} evaluates the feasibility of modulated attacks and the effectiveness of our defense. Section~\ref{sec:related} reviews related work and Section~\ref{sec:conclusion} concludes the paper.

% End
\vspace{-0.1in}
\section{Background}\label{sec:back}
In this section, we introduce necessary background  information on audio signal processing, ASR systems, and replay attacks.

\vspace{-0.08in}
\subsection{Audio Signal Processing}
\noindent\rvb{As there are so many technical terms on voice signal processing, we only briefly introduce two necessary terms that are tightly related to our work.}

\noindent {\bf Signal Frequency Spectrum.}
\rvb{Generally}, a signal is represented as a time-domain form $x(t)$, recording the signal amplitude at each time point. Frequency spectrum is another signal representation, providing a way to analyze the signal in the frequency domain. \rvb{ Fourier analysis~\cite{Fourierseries} can decompose a time-domain signal as} the sum of multiple sinusoidal signals of different frequencies, i.e., $x(t) = \sum_{n} A_{n} \cdot \sin{(2 \pi f_{n} t + \phi_{n})}$. The $n$-th sinusoidal signal is called the frequency component with a frequency value of $f_{n}$. The set of $\{A_{n}\}$ is called the amplitude spectrum \rvb{that} represents the amplitude of each frequency component. $\{ \phi_{n} \}$ is the phase spectrum recording the phase of each component. The frequency spectrum of a signal is the combination of amplitude and phase spectrum. %For audio signals, the greater the proportion of high-frequency components, the sharper the sound. For the time-domain signals, an effective method to get frequency spectrum is the fast Fourier transform (FFT) algorithm~\cite{FFT}, which is widely used in signal processing realm. 

% transfer function in signal processing
% x(n) * h(n) = y(n)
% X(w)   H(w) = Y(w)
%=========================================================================================
\noindent {\bf Frequency Response.}
%A system, such as a loudspeaker, has different responses to different frequency signals. 
\rvb{Frequency response represents the output frequency and phase spectrum of a system or a device in response to a stimulus signal~\cite{Frequencyresponse}.}
%\emph{frequency response} of a system consists of amplitude response and phase response. 
\rvb{When a stimulus signal that is typically a single-frequency sine wave passes through a system, the ratio of the output to input amplitude (i.e., signal gain) varies with the input frequency of the stimulus  signal.} 
%Here, the ratio of the output to input is called the \emph{signal gain} at this frequency. The signal gain \rvb{typically} 
The amplitude response of the system \rvb{represents} the signal gains at all frequencies. Hence, the output amplitude spectrum of a signal is the product of the input amplitude spectrum and the amplitude response of the system. A system is a high-pass (low-pass) filter if the system has a higher amplitude response in the high-frequency (low-frequency) range. The phase response of a system represents the phase shifts of different frequency signals passing through the system.

% ASR system model
% (1) genuine audio model 
% (2) ASR feature selection - MFCC (amplitude spectrum)
%=========================================================================================

  \begin{figure}[t] 
    %\vspace{-0.01in} 
    \centering
    \includegraphics[width=3.2in]{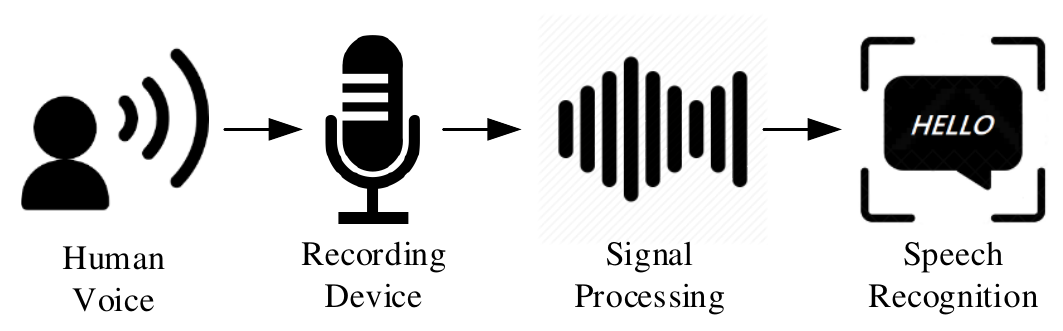}
    %\vspace{-0.1in}
    \caption{ASR Systems.}
    \label{fig:ASR}
    \vspace{-0.2in} 
    \end{figure}

\vspace{-0.05in}
\subsection{ASR Systems and Replay Attacks}

\noindent \rvb{Figure~\ref{fig:ASR} shows} an automatic speech recognition (ASR) system. A recording device such as a microphone captures  the audio signals from the air \rvb{and converts} the acoustic vibrations into electrical signals. Then, the analog electrical signals are converted to digital signals for \rvb{signal processing}. The processed digital signals are used for speech recognition or speaker identification in the subsequent processing of the ASR systems. These digital signals are commonly referred to as the genuine audio if the signals are directly collected from the live human speakers.

\begin{figure*}[t] 
    \centering
    \subfloat[Classical Replay Attacks]{\includegraphics[width=6.6in]{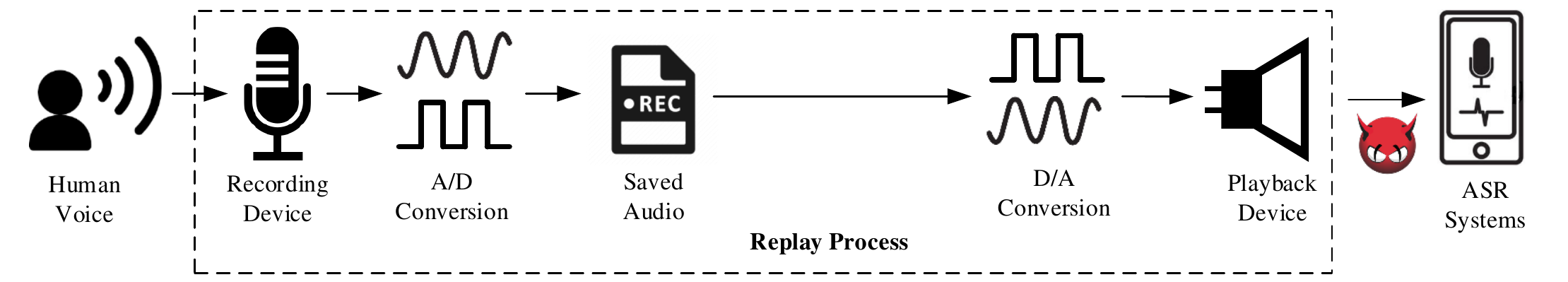}\label{fig:replay}}
    \vfil
    \vspace{-0.1in}
    \subfloat[Modulated Replay Attacks]{\includegraphics[width=6.6in]{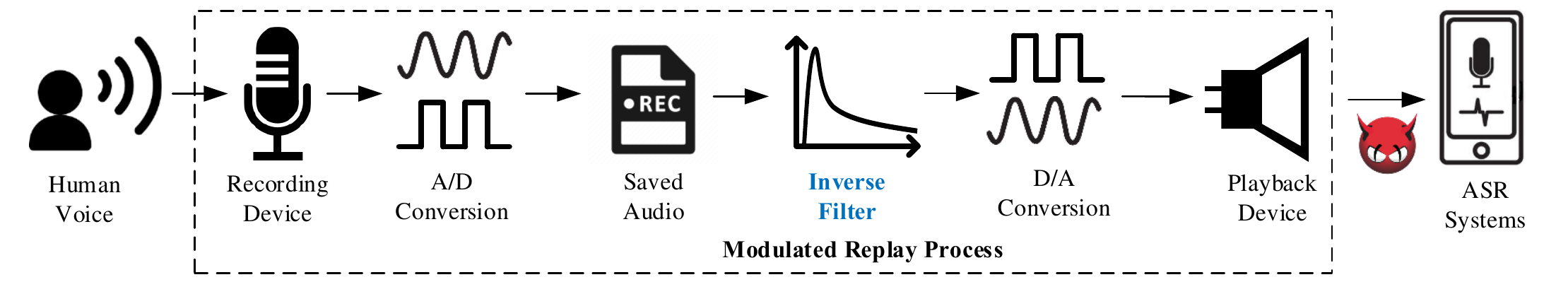}\label{fig:modreplay}}
    \caption{Classical Replay Attacks vs. Modulated Replay Attacks.}
    \label{fig:replaytotal}
    \vspace{-0.1in} 
\end{figure*}

%To complete the speech recognition task, ASR systems need to extract digital features from the genuine audio. The most common feature is the Mel-Frequency Cepstrum Coefficient (MFCC). In the feature extraction, a signal will be first transformed into the frequency domain by the FFT algorithm. And the amplitude spectrum of the signal will be extracted. Then Mel filtering will translate the spectrum into the Mel scale, which aims to mimic the non-linear human perception of voice frequency. To further mimic the human perception of voice loudness, the Mel filter bank is transformed as a logarithmic scale. Finally, we take the discrete cosine transform (DCT) of Mel filter bank as the MFCC features.

% replay process
%=========================================================================================
ASR systems are vulnerable to replay attacks. The classical replay attack model contains four basic components, i.e., a recording device, an analog-to-digital (A/D) converter, a digital-to-analog (D/A) converter, and a playback device such as a loudspeaker. Compared with \rvb{the normal speech recognition steps} in the ASR systems, the replay attack \rvb{contains} a replay process as shown in Figure \ref{fig:replaytotal}(a). The attacker firstly collects the genuine human voice using a recording device and  converts the voice to a digital signal by an A/D converter. The digital signal can be stored in a disk device as a lossless compression format or be spread through the Internet. \rvb{After that,} the attacker \rvb{playbacks the digital signal} near the targeted ASR system, \rvb{which spoofs} the system to provide expected services. In the playback process, the stored digital signal is converted to an analog electric signal by a D/A converter. Then, the electric signal will be played as an acoustic wave by a playback device.%, such as \rvb{a loudspeaker}. % a mobile phone. Furthermore, the replay process can be either conducted by the attacker in person or under the remote control.

\vspace{-0.02in}
\section{Modulated Replay Attacks}\label{sec:attack}
In this section, we propose a new attack called \emph{modulated replay attack}. By analyzing the replay model and replay detection methods, we find the existing defenses rely on the features of amplitude spectrum. Based on these observations, we propose a method to estimate the speaker response and build an inverse filter to compensate the amplitude spectrum of the replay signals. The reconstructed replay signals can bypass the existing defenses.

%=========================================================================================
\vspace{-0.05in}
\subsection{Impacts of Replay Components}

% effect of replay process is critical
Although classical replay attacks can achieve a high success rate in spoofing ASR systems, some acoustic features can still be utilized to distinguish the replay audio from the genuine audio. As shown in Figure~\ref{fig:replaytotal}(a), the main difference between these two types of audio is the additional replay process that the replay audio goes through.

% effect of four components
We study the impacts from four components involved in the replay process, namely, the recording device, A/D converter, D/A converter, and the playback device. We observe that the impacts from the first three components are negligible, and the most significant impacts on replay signals come from the playback device. 
% The recording device.
First, an attacker needs to use a recording device to collect the voice command. The main factors that influence the recording process include the non-linearity of modern microphones and the ambient noise. However, the nonlinear frequency range of a microphone is much higher than the human speech frequency. When it comes to the ambient noise, it is hard to tell if the noise is introduced during the attacker's recording process or the ASR recording phase. 

% A/D converter
Second, when the A/D converter transforms the signal into a digital form, it may cause the information loss of the analog signal due to the sampling and quantization operations. However, this effect is limited since the modern recording devices have a higher sampling rate (not less than 44.1 kHz) and a higher bit depth (usually higher than 16-bit resolution) than the old-fashioned recorders.

% D/A converter
Third, the signal can be transformed back into the analog form by the D/A converter, where a low-pass filter is used to eliminate the high-frequency components caused by sampling. As the sampling frequency is at least 10 times larger than the speech frequency, the filter in the D/A converter has little effect on the audio signals. 

% playback device (can be deleted)
Finally, we find the most significant effects on the replay signal are caused by the playback device. Because of the shape and volume, the acoustic characteristics of loudspeakers are greatly different from those of human vocal organs. Due to the resonance of the speaker enclosure, the voice from loudspeakers contains low-power "additive" noise. These resonant frequency components are typically within 20-60 Hz that human cannot produce~\cite{subbass}.
%
% playback device
Another important feature of loudspeakers is the low-frequency response distortion due to the limited size of loudspeakers. Within the speech frequency range, the amplitude response of a loudspeaker is a high-pass filter with a cut-off frequency typically near 500 Hz~\cite{phonefreqresp}. As a result, the power of low-frequency components will be attenuated rapidly when a voice signal passes through a loudspeaker, which is the "multiplicative" characteristic of speakers in human speech frequency range~\cite{Frequencyresponse}. Even though the genuine audio and the replay audio have the same fundamental frequency and harmonic frequencies, the power distributions of frequency components remain different. The low-frequency components of replay audio have a smaller power proportion compared with those of genuine audio. Because the different power distributions lead to different timbre, the voice signals sound different even with the same loudness and fundamental frequency.

%=========================================================================================
\vspace{-0.15in}
\subsection{Attack Overview}

Based on our observation that existing defenses utilize the amplitude spectrum to detect replay attacks, the key idea of our proposed attack is to modulate the voice signal so that the replay audio has the same amplitude spectrum as the genuine audio. As shown in Figure \ref{fig:replaytotal}(b), the most critical component is the modulation processor between the A/D and D/A conversion. The modulation processor can compensate for the amplitude spectrum distortion caused by the replay process. By adding the modulation processor, we can deal with the modulated replay process as an all-pass filter, so that the modulated replay audio will have an equivalent processing flow as the genuine audio.

In the classical replay process, the recording device and the A/D and D/A conversion have limited effects on the replay audio. Thus, our modulation processor mainly targets the playback device, specifically, the amplitude response of it. There are many types of playback devices, such as mobile phones, MP3 players, and remote IoT devices in the victim's home. We  acquire the amplitude response of a playback device by measuring the output spectrum in response to different frequency inputs. If the playback device is under remote control that the amplitude response cannot be measured directly, we can estimate an approximate response from the same or similar devices. After acquiring the amplitude response of the playback device, we  design an inverse filter that is a key component in the modulation processor to compensate for the distortion of the signal spectrum. After the spectrum modulation, the modulated replay audio can bypass existing frequency-based defense. 

In our modulated replay attack, the modulation processor only deals with the voice signals in digital form. Therefore, the inverse filter is designed by digital signal processing (DSP) techniques. The modulated signals can be stored or spread through the Internet to launch a remote replay attack.

\begin{figure}[h]
    \vspace{0.05in}
    \centering
    \includegraphics[width=\linewidth]{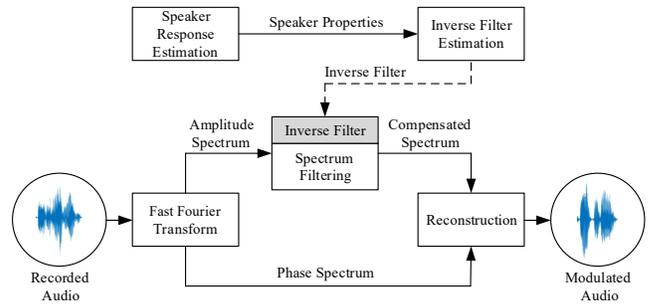}\\
    \caption{The modulation processor.}
    \label{fig_modulated}
    \vspace{-0.15in}
\end{figure}

%=========================================================================================
\subsection{Modulation Processor}

The structure of the modulation processor is shown in Figure~\ref{fig_modulated}. The recorded audio is a digital signal collected from the genuine human voice. The audio is then transformed from the time domain to the frequency domain by fast Fourier transform algorithm. The FFT output is a complex frequency spectrum that can be divided into two parts: (1) the amplitude spectrum that records the amplitude for each frequency component, and (2) the phase spectrum that records the phase angle for each frequency component. We only process the amplitude spectrum in the modulation processor for two reasons. One reason is that both the ASR systems and the replay detection systems extract signal features from the amplitude spectrum. Another reason is that the human ear is less sensitive to the sound phase compared to the sound amplitude. Therefore, the phase spectrum will remain the same in the modulation processor.

The inverse filter, estimated based on the speaker properties, is the key component in the modulation processor. Specifically, the inverse filter is an engine in the spectrum filtering unit, transforming the amplitude spectrum to a compensated spectrum. By the spectrum filtering, the inverse filter can offset the distortion effect caused by the playback device. Therefore, the amplitude responses of the inverse filter and the loudspeaker are complementary, because the combination of these two transfer functions is a constant function that represents an all-pass filter.

After processing the amplitude spectrum with the inverse filter, we can obtain a compensated spectrum that has a better frequency characteristic in the low-frequency range. With both the compensated spectrum and the phase spectrum, the inverse fast Fourier transform (iFFT) is utilized to convert the reconstructed signal from frequency domain to time domain. Finally, we can get a modulated audio in the time domain. Moreover, the modulated audio will be stored as a digital format, which is ready to be used to launch the modulated replay attack.

\subsection{Inverse Filter Estimation}\label{sec:if}
The inverse filter is estimated through the speaker properties. Therefore, it is necessary to measure the amplitude response of the loudspeaker directly. If it is not possible for direct measurement, the amplitude response can be estimated by measuring the speakers in the same or similar model. 

When measuring the speaker properties, we set a single-frequency test signal as the speaker input and record the output audio, as shown in Figure~\ref{fig_speaker}(a). Through checking the output amplitude spectrum, we can get the output amplitude of the corresponding frequency. The amplitude response of the single frequency is the output amplitude divided by the input amplitude. Through changing the input frequency of the test signal, we can obtain the amplitude response over the entire speech frequency range. 

Because the test frequencies of the input signals are discrete, the amplitude response is a series of discrete data points, as shown in Figure~\ref{fig_speaker}(b). To obtain a continuous response function over the entire frequency range, we fill in the missing data by the curve fitting. Cubic spline interpolation~\cite{Spline} will be used to construct a continuous and smooth response curve $H(f)$ with multiple polynomials of degree 3.

As the inverse filter is implemented on the digital signals, we need to convert the continuous response function into a digital form. After the Fourier transform, the signal spectrum has a fixed frequency interval $\Delta f$ denoting the frequency resolution. Hence, we sample the continuous response function at the same frequency interval and get a finer-grained amplitude response. The digital amplitude response of the electronic speaker is denoted as $H(k)$.

\begin{figure}[t]
    \vspace{-0.05in}
    \centering
    \subfloat[The measurement of single-frequency response.]{\includegraphics[width=3.3in]{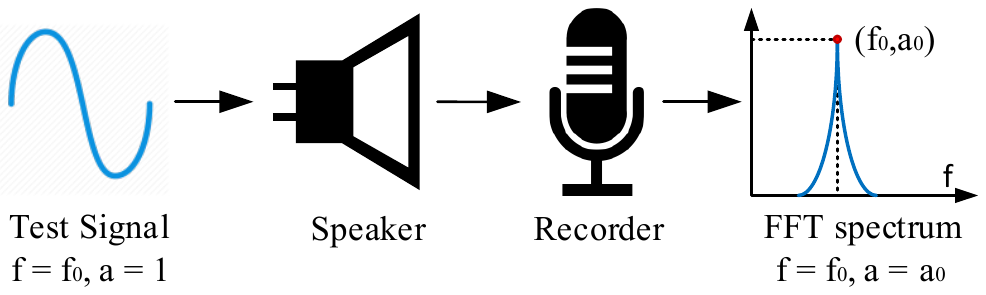}%
    \label{fig_measure}}
    \hfill
    \vspace{-0.1in}
    \subfloat[The processing of fitting speaker response.]{\includegraphics[width=3.3in]{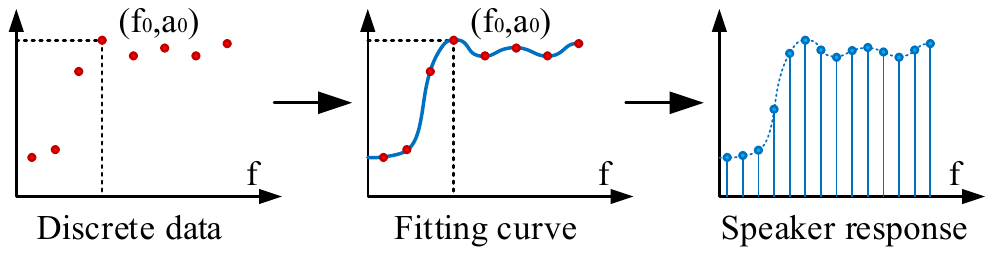}%
    \label{fig_process}}
    \vspace{-0.08in}
    \caption{The method to estimate the speaker response.}
    \label{fig_speaker}
    \vspace{-0.15in}
\end{figure}

After obtaining the speaker amplitude response, we can design the inverse filter by the complementary principle. The amplitude responses of the inverse filter and the speaker can cancel each other, minimizing the impact of the replay process. Hence, the inverse filter $H^{-1}(k)$ should satisfy the all-pass condition that $H^{-1}(k) \cdot H(k) = C$ when $H(k) \neq 0$. $C$ is a positive constant which is typically 1. In addition, if $H(k) = 0$ for any $k$, $H^{-1}(k)$ should also be 0.

Another speaker property is the sub-bass (0-60 Hz) energy, which can be generated by loudspeakers, not humans. The sub-bass features are dependent on the speaker models and enclosure structure~\cite{subbass}. Although attackers may pick the speakers to minimize the sub-bass energy, we still need to minimize the possibility of detected by the sub-bass features. Hence, we optimize the inverse filter in two ways. We set $H^{-1}(k)=0$ when the frequency is within 0-60 Hz, because we do not want to amplify the existing noise in the sub-bass range. Another way is to enhance the inverse filter response in the speech frequency range so as to decrease the relative proportion of the additive sub-bass energy. By these optimizations, we can decrease the metric of sub-bass energy balance under the detection threshold.

By applying the inverse filter before the playback device, we can compensate the unwanted replay effects that are caused by the electronic speakers.

%=========================================================================================
\vspace{-0.05in}
\subsection{Spectrum Processing}

The spectrum processing will involve three phases: the time-frequency domain conversion, the amplitude spectrum filtering, and the modulated signal reconstruction. 

%-----------------------------------------------------------------------------------------
\vspace{-0.03in}
\subsubsection{Time-Frequency Domain Conversion}
First, we need to convert the recorded audio from the time domain into the frequency domain, because it is easier to filter the signals in the frequency domain. For a $L$-length signal segment, we pad the signal with zeros so that the total signal length would be $N$, where $N$ is the smallest power of 2 greater than or equal to $L$. The extended signal is denoted as $x(n), n = 0,1,...,N-1$. Then we convert the time-domain signal $x(n)$ into the frequency-domain representation $X(k)$ through the fast Fourier transform algorithm.

\vspace{-0.1in}
\begin{equation}
X(k) = \sum_{n=0}^{N-1} x(n) \cdot e^{-i 2 \pi k n / N} , k = 0,1,...,N-1   
\end{equation}

$X(k)$ is the frequency spectrum of the original signal in the form of complex numbers. The frequency resolution is defined as the frequency interval $\Delta f = f_{s} / N$, where $f_{s}$ is the sampling rate of the recording audio. 

Then we split the complex frequency spectrum into two parts. The magnitude spectrum $X_{m}(k) = |X(k)|$, represents the signal amplitude of different frequency components $k \cdot \Delta f, k = 0,1,...,N-1$. The phase spectrum $X_{p}(k) = \angle X(k)$ in radians, which is independent with the amplitude information, represents where the frequency components lie in time.

%-----------------------------------------------------------------------------------------
\vspace{-0.03in}
\subsubsection{Spectrum Filtering}

The inverse filter will only be implemented in the amplitude spectrum. The phase spectrum will remain unchanged. The effect of applying a filter is to change the shape of the original amplitude spectrum. According to the system response theory, the compensated amplitude spectrum is the product of the input amplitude spectrum and the amplitude response of the inverse filter. Hence, after modulating the signal with the inverse filter $H^{-1}(k)$, the compensated spectrum $Y_{m}(k)$ satisfies that $Y_{m}(k) = X_{m}(k) \cdot H^{-1}(k)$.

Note that the amplitude spectrum of the speaker output is also the product of the input amplitude spectrum and the speaker amplitude response. Therefore, the amplitude spectrum of the modulated replay audio will be $S_{m}(k) = Y_{m}(k) \cdot H(k)$. We can find that $S_{m}(k) = X_{m}(k) \cdot H^{-1}(k) \cdot H(k) = C \cdot X_{m}(k)$. Because $C$ is a constant, the power distribution of frequency components in the modulated replay audio will be the same as that in the genuine audio, making it harder for ASR systems to detect the replay attack. 

%-----------------------------------------------------------------------------------------
\subsubsection{Modulated Signal Reconstruction} 

After modifying the amplitude spectrum to compensate for the energy loss in the following playback phase, we need to reconstruct the signal in the frequency domain. The modulated signal will have the compensated amplitude spectrum and remain the original phase spectrum. Therefore, the complex frequency spectrum will be reconstructed by the amplitude $Y_{m}(k)$ and the phase angle $X_{p}(k)$. That means the frequency spectrum of the modulated signal should be $Y(k) = Y_{m}(k) \cdot e^{i X_{p}(k)}$ according to the exponential form of complex numbers. After reconstructing the modulated signal in the frequency domain, the complex frequency spectrum $Y(k)$ will be converted back into the time domain by the inverse fast Fourier transform algorithm. 

\vspace{-0.1in}
\begin{equation}
y(n) = \frac{1}{N} \sum_{k=0}^{N-1} Y(k) \cdot e^{i 2 \pi k n / N}, n = 0,1,...,N-1 
\end{equation}

To ensure that the length of the modulated audio is the same as that of the original audio, the last $(N-L)$ data points in $y(n)$ will be discarded. Hence, the total signal length of the modulated audio would be $L$. Then, the final modulated audio will be saved as a digital format to complete the replay attack.

%=========================================================================================
\vspace{-0.05in}
\section{Countermeasure: Dual-domain Detection}\label{sec:defense}

In this section, we propose a countermeasure called \defenseName~  against the modulated replay attack. Due to the similarity of the amplitude spectrum between the modulated replay signals and the genuine signals, the defense will be conducted not only in the frequency domain, but also in the time domain.

%=========================================================================================
\vspace{-0.1in}
\subsection{Defense Overview}

In our scheme, our countermeasure contains two inseparable parts: frequency-domain defense and time-domain defense. A voice command must pass both defenses in time and frequency domains before it can be accepted by ASR systems.

The frequency-domain defense is proved to be effective against classical replay attacks. Because of the frequency spectrum distortion caused by the replay process, we use the power spectrum distribution (timbre) to distinguish the classical replay audio. The area under the CDF curve (AUC) of the power spectrum distribution is extracted as the key frequency-domain feature. We find that the AUC value of the genuine audio is statistically larger than that of the replay audio. By utilizing the frequency-domain defense, we filter out the threat from the classical replay attacks.

The modulated replay audio has the same amplitude spectrum as the genuine audio. Hence, we need to detect the modulated replay audio in other domains. In the phase domain, there is no useful information in the phase spectrum, which records the starting points of each frequency component in the time axis. But in the time domain, we discover and formally prove the following theorem.

\vskip 3pt
\noindent {\bf Theorem. } \emph{There are inevitably spurious oscillations (ringing artifacts) in the modulated replay audio. The amplitude of the ringing artifacts is restricted by the signal amplitude spectrum and absolute phase shifts.}
\vskip 3pt

The mathematical proof of the theorem is demonstrated in Appendix~\ref{append:proof}. In the time domain, based on this theorem, there are small ringing artifacts in the modulated replay signals. However, in the genuine audio and the classical replay audio, the waveform is statistically smooth.

We define a new metric called local extrema ratio to quantitatively describe the strengths of the ringing artifacts. We utilize local extrema ratios at different granularity as the key time-domain feature and filter out modulated replay attacks  using an SVM classifier. 

%========================================================================================

\vspace{-0.08in}
\subsection{Time-domain Defense}
\label{sec::timedomain}

\begin{figure}[t]
    \vspace{-0.1in}
    \centering
    \subfloat[Coarse granularity ($r=10$)]{\includegraphics[width=1.66in]{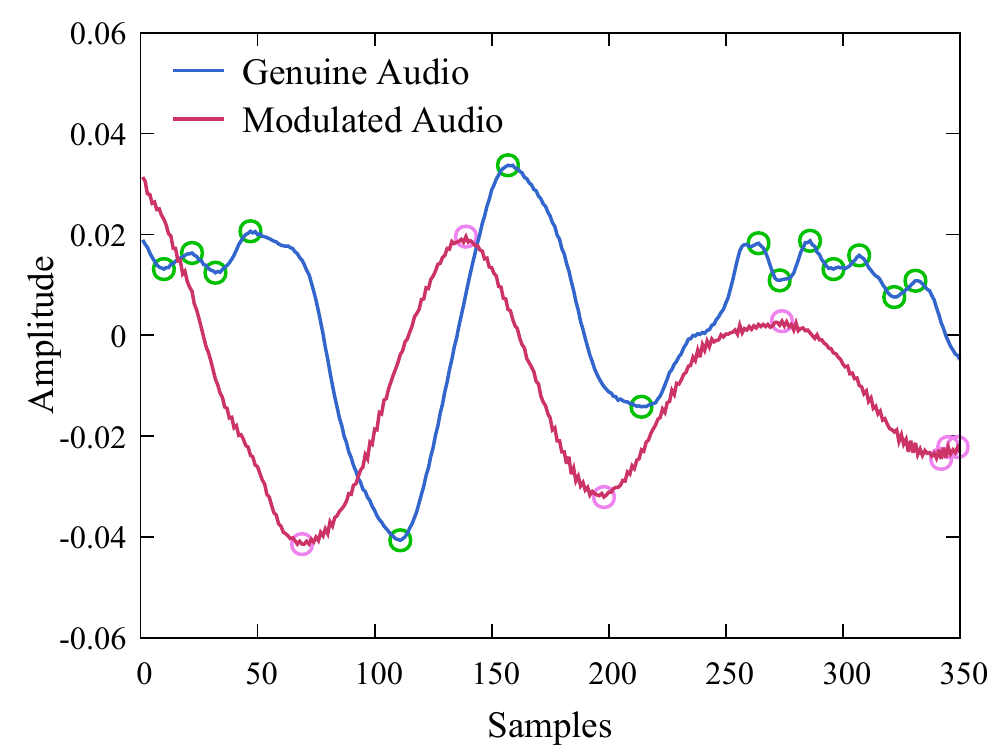}%
    \label{fig_largegran}}
    %\hfil
    %\hspace{0.2in}
    \subfloat[Fine granularity ($r=1$)]{\includegraphics[width=1.66in]{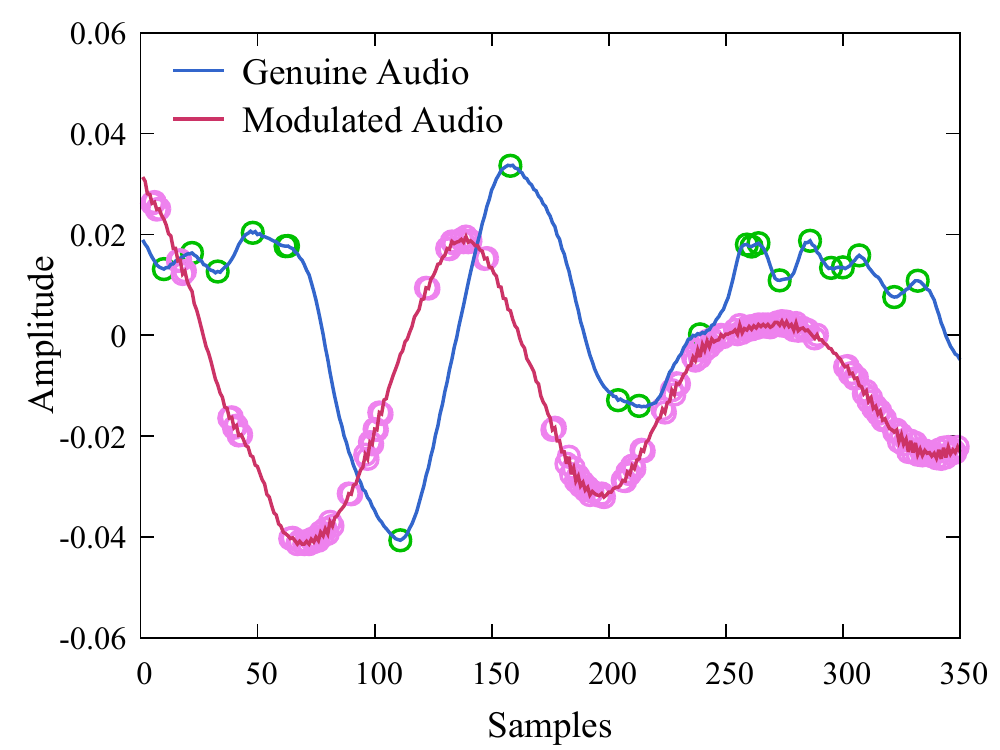}
    \label{fig_smallgran}}
    \vspace{-0.1in}
    \caption{The local extrema under different granularity.}
    \label{fig_granularity}
    \vspace{-0.2in}
\end{figure}  

Because of the difficulty in detecting the modulated replay audio via frequency and phase features, we seek the defenses in the time domain. By our observations and mathematical proof (see Appendix~\ref{append:proof}), we find there are small ringing artifacts in the time-domain signals when performing the modulated replay attack. Although these time-domain artifacts correspond to the high-frequency components, the power of the artifact is too small to be detected in the frequency domain because the maximum amplitude is constraint by the Equation (\ref{eq:constraint}). In the frequency domain, the ringing artifacts can be easily mistaken for the ambient noise. Hence, we propose a time-domain defense method that utilizes the pattern of small ringing artifacts in the modulated replay audio. 

The ringing artifacts pattern is a robust feature that cannot be further compensated by a higher-order filter. The ringing artifacts are caused by the physical property, but not the modulated process itself. When we modulate the recorded audio, there are no ringing artifacts in the processed audio. The ringing artifacts only occur after replaying the processed audio, thus becoming an inevitable feature in the modulated replay audio. In order to describe the ringing artifacts in the time-domain signals, we take \emph{local extreme ratio} as the metric. We firstly give a definition of \emph{local extrema}. 

{\bf Definition:}  In a signal segment $y$, if a sampling point $y_{i}$ is the maximum value or the minimum value in the $(2r+1)$-length window $[y_{i-r}, y_{i+r}]$, $y_{i}$ is a local extrema in the time-domain signal. Note that if the index of the window element is out of bounds, we will pad the window with the nearest effective element.

%Algorithm \ref{alg:extrema} is illustrated to determine the local extrema in a specific window.

Local extrema ratio (LER) is defined as the ratio of the local extrema amount to the total signal length. Given an input signal segment, the local extrema ratio correlates with the window parameter $r$. When the window size is small, the LER calculation is in fine granularity that reflects the small ringing artifacts in the time-domain signals. When the window size is large, LER shows the overall change trend of the signals. The modulated replay signals and the genuine signals have different patterns in local extrema ratio with different granularity. We can detect the modulated replay attack via identifying the LER patterns with different parameter $r \in [1, r_{max}]$. Algorithm~\ref{alg:timingdetect} shows the function of obtaining the local extrema patterns and detecting the modulated replay audio.

In Figure~\ref{fig_granularity}(a), under the coarse granularity (larger window size), the number of local extrema does not differ much between modulated replay audio and genuine audio. However, in Figure~\ref{fig_granularity}(b), the situation would be different under the fine granularity (smaller window size). Due to the ringing artifacts, small spurious oscillations occur in modulated replay audio. The number of local extrema in modulated replay audio will be significantly larger than that in genuine audio, which becomes a critical feature that helps us detect the modulated replay attack. A Support Vector Machine (SVM) classifier is trained to distinguish modulated replay audio by determining the local extrema pattern (LEP) with different granularity. The time-domain attack detection is shown in Algorithm~\ref{alg:timingdetect}. The audio will become the candidate audio for the frequency-domain checking if it does not come from the modulated replay attack.

\begin{algorithm}[h]
    \caption{Time-Domain Modulated Replay Detection}
    \label{alg:timingdetect}
    \begin{algorithmic}[1]
      \Require $\text{an~audio~signal}~\bm{y},\text{the largest wnd parameter} ~r_{max}$
      \Ensure $\text{whether there is a modulated replay attack}$
      \State $l \gets length({\bm y})$
      \State $cnt \gets 0$
      \State $\bm{LEP} \gets [~]$
      \For{$r \gets 1$ to $r_{max}$}
        \State $/*~Calculate~Local~Extrema~Ratio~*/$
        \For{$i \gets 1$ to $(l-2)$}
           %\State ${\bm w} \gets [~]$
            %\If {$i \leq r$}
            %\State $Pad~{\bm w}~with~{\bm y}_1$
            %\ElsIf{$i \geq (l-r+1)$}
            %\State $Pad~{\bm w}~with~{\bm y}_l$
            %\Else
            \State $low \gets max(i-r,0)$
            \State $high \gets min(i+r,l-1)$
            \State ${\bm w} \gets [{\bm  y}_{low},...,{\bm y}_{high}] $
            %\EndIf   
            \If{${\bm w}_{r} = \Call{Min}{\bm w}$ \textbf{or} ${\bm w}_{r} = \Call{Max}{\bm w}$}
            \State $/*~Get~a~Local~Extreme~Point*~/$
            \State $cnt \gets cnt+1$
            \EndIf
        \EndFor
        \State $\bm{LEP}_i = cnt/(l-2)$
      \EndFor
      \State $/*~Identify~Modulated~Replay~Attacks~with~\bm{LEP}~*/$
       
      \If{$SVM\_Classifier(\bm{LEP}) = 1 $}
        \State $\textbf{output}~modulated~replay~attacks$
      \Else
        \State $\textbf{output}~candidate~audio$
      \EndIf
    \end{algorithmic}
  \end{algorithm}

%=========================================================================================

\vspace{-0.15in}
\subsection{Frequency-domain Defense}

The frequency-domain defense is used to counter the classic replay attack. It is based on the noticeable different timbre of the voice sounded from human and electronic speakers. 

In the replay model, each component frequency in the genuine audio is exactly the same as that in the replay audio, no matter the fundamental frequency or the harmonics. For example, if the fundamental frequency of the genuine audio is 500 Hz, the replay audio will also have a fundamental frequency of 500 Hz. However, even with the same component frequencies, the genuine human voice and the replay voice sound different in our perception. The main reason is the power distributions of the frequency components, namely the timbre, are different. 

For human, our voice is sounded from the phonatory organ. The typical sound frequency for human is within the range from 85 Hz to 4 kHz, where the low-frequency components are dominant. For electronic speakers, there is an acoustic defect on the low-frequency components due to the speaker structure, materials, and the limited size. The power of the replay signals decays dramatically in the low-frequency range, especially under 500 Hz. Meanwhile, the human fundamental frequency range is 64-523 Hz for men, and 160-1200 Hz for women. Hence, the electronic speakers will attenuate the power in the human fundamental frequency because of the speaker properties. With respect to the power distribution, the power of the genuine audio is mainly concentrated in the low-frequency range, while the power of the replay audio is more distributed in the speech frequency range. Our frequency-domain defense utilizes these timbre features to defeat the classic replay attack. 

\begin{figure}[t]
  \centering
  \includegraphics[width=2.75in]{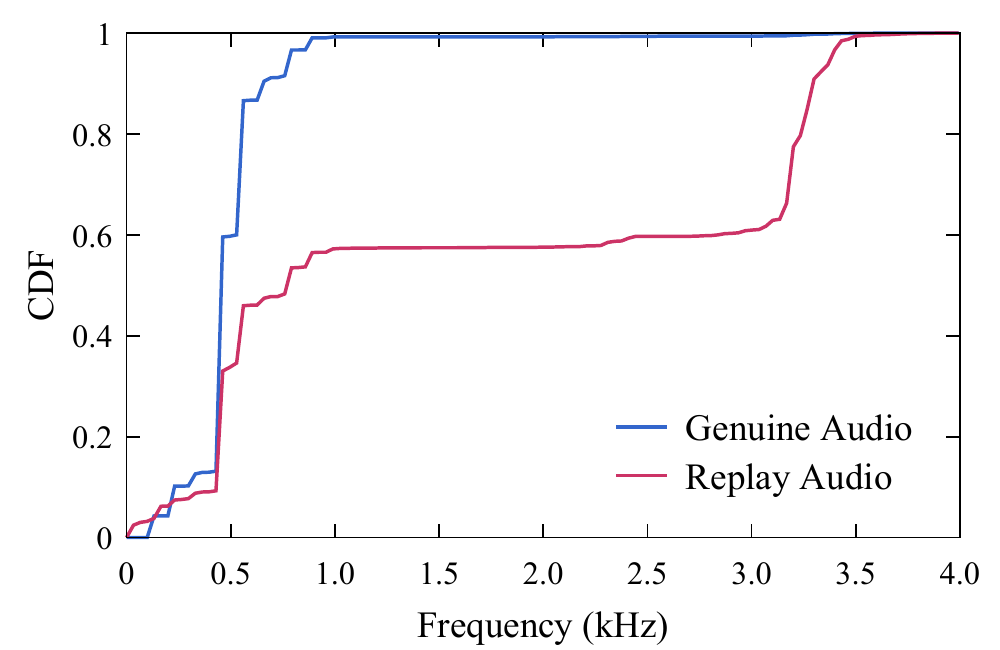}\\
  \caption{Cumulative density function of spectral power distribution for genuine and direct replay audios.}
  \label{fig_cdf}
  \vspace{-0.15in}
\end{figure}

\begin{algorithm}[h]
    \caption{Frequency-Domain Replay Detection}
    \label{alg:fredetect}
    \begin{algorithmic}[1]
      \Require $\text{an audio signal}~\bm{y},~\text{FFT point numbers} ~N,~~~~~~~~~$ $\text{decision threshold}~A_{th}$
      \Ensure $\text{whether there is a classical replay attack}$
      \State $/*~Calculate~Normalized~Signal~Power~Spectrum*/$
      \State $\bm{K} \gets FFT({\bm y},N)$
      \State $p \gets \sum_{i=0}^{N-1}{\bm{K}_{i}^2}$
      \For{$i \gets 0$ to $N-1$}
        \State $\bm{D}_i = {\bm{K}_i^2}/{p}$ 
      \EndFor
      \State $/*~Calculate~the~CDF~and~its~AUC~*/$
      \State $\bm{A}_0 = \bm{D}_0$
      \For{$i \gets 1$ to $N-1$}
        \State $\bm{A}_i = \bm{A}_{i-1} + \bm{D}_i$ 
      \EndFor
      \State $AUC = \sum_{i=0}^{N-1}{\bm{A}_i}/N$
      \State $/*~Identify~Classical~Replay~Attacks~with~AUC*/$
      \If{$AUC~<~A_{th} $}
        \State $\textbf{output}~replay~attacks$
      \Else
        \State $\textbf{output}~genuine~audio$
      \EndIf
    \end{algorithmic}
  \end{algorithm}

Timbre is described by the power distribution of different frequency components. It is necessary to define a mathematical description for the timbre. When an ASR system captures a voice signal from the air with a sampling rate of $f_{s}$, we firstly obtain the amplitude spectrum of the signal through $N$-point fast Fourier transform. The signal amplitude spectrum is denoted as $K(n)$, $n=0,...,N-1$, with the frequency resolution $\Delta f = f_{s} / N$. The frequency value of the $i$-th component is $i \cdot \Delta f$, while the amplitude is $K(i)$. 
Hence, the signal power spectrum is $K^{2}(n)$, and the power spectral density (PSD) of frequency components is defined as $D(n) = K^{2}(n) / \sum_{i=0}^{N-1}K^{2}(i)$. To distinguish the different power distributions, we measure the cumulative density function (CDF) $A(n)$ for the power spectral density, 

\vspace{-0.1in}
\begin{equation}
A(n) = \sum_{i=0}^{n}D(i) =  \sum_{i=0}^{n}K^{2}(i) / \sum_{i=0}^{N-1}K^2(i).
\end{equation}

$A(n)$ is a monotonically increasing function, with a range of $[0,1]$. As shown in Figure~\ref{fig_cdf}, the power spectrum CDF of genuine audios and replay audios are quite different. For  genuine audios, the power is concentrated in the low-frequency range, so the CDF rises more quickly. For replay audios, the CDF function grows slower due to the more distributed power spectrum. We utilize the CDF characteristic to distinguish replay audios from genuine audios.

We utilize the area under the CDF curve (AUC) to verify and filter out the classic replay audio. AUC is calculated as $\sum_{n} A(n) / N$. If the AUC value is less than a specific threshold $A_{TH} \in (0,1)$, there is a classic replay attack. We show the frequency-domain attack detection in Algorithm~\ref{alg:fredetect}.

\vspace{-0.05in}
\subsection{Security Analysis}
We discover and prove that there are inevitably either ringing artifacts in the time domain or spectrum distortion in the frequency domain, no matter if replay signals are modulated.

% Frequency domain defense
For the frequency-domain defense, the principle comes from the signal difference of the power spectrum distributions. It is known that human speech is not a single-frequency signal, but a signal with fundamental frequency $f$ and several harmonics $nf, n \geq 2$. Within the human voice frequency range, the speaker response has a great difference in the low-frequency band and the high-frequency band, which means $H(f) \neq H(nf)$. As a result, the power ratio of genuine audio $A(f)/A(nf)$ is different from that of the corresponding replay audio $(H(f) \cdot A(f))/(H(nf) \cdot A(nf))$. The different power ratios cause the difference in the power spectrum distributions.

For the time-domain defense, we can prove that there are inevitably spurious oscillations (ringing artifacts) in the modulated replay audio. The critical factor is the inevitable phase shifts that cannot be accurately measured (see details in Appendix~\ref{append:proof}). Although the amplitude spectrums are the same, the signal phase spectrums can be different. The relationship between the amplitude spectrum to the time-domain signals is a one-to-many relationship. Moreover, we cannot compensate for the phase shifts due to the limits of the accuracy in measurements. Even a small phase error can cause ringing artifacts in the time-domain. That is why we need to check the signals in both frequency domain and time domain.

Besides, the high local extrema ratio in the modulated replay audio can result from other aspects, i.e. the measurement error, the FFT truncation effect, and the time-domain joint. 
% The measurement error
First, the measurement involves exponential computation, where the round-off errors can be accumulated so that the amplitude estimation is not accurate, finally bringing about parasitic vibration in the modulated replay signals.
% FFT truncation
Second, the real FFT operation works on a finite-length signal, which is equivalent to adding a window function to an infinite-length signal. The window function in the time domain corresponds to a $sinc(x)$ function convolved in the frequency domain, causing the frequency spectrum to expand and overlap.
% Gibbs
Third, when splicing the reconstructed signals into new audio, there is no guarantee of the continuity at the starting and ending splice points. A discontinuous splice point can lead to ringing artifacts due to the Gibbs phenomenon~\cite{GibbsPh}.

Moreover, ringing artifacts cannot be further compensated by a higher-order filter since ringing artifacts only occur after the replay process rather than after the modulation process. Moreover, the iterative filtering scheme can reduce ringing artifacts in image restoration that are mainly caused by overshoot and oscillations in the step response of an image filter~\cite{Ringing}. However, it is not suitable for speech signals because the ringing artifacts are introduced by hardware properties. Even if attackers might reduce ringing artifacts to a certain extent, the time-domain defense can still detect modulated replay audio. This is because our method does not rely on the amplitude threshold of ringing artifacts. Although the amplitude of ringing artifacts may decrease, the local extrema cannot be eliminated. The time-domain defense uses local extrema as features so that even small ringing artifacts can be detected.

%=========================================================================================
\section{Evaluation}\label{sec:eval}

In this section, we conduct experiments in a real testbed to evaluate the modulated replay attack and our defense.

\begin{figure}[h]
  \vspace{-0.05in}
  \centering
  \includegraphics[width=3.3in]{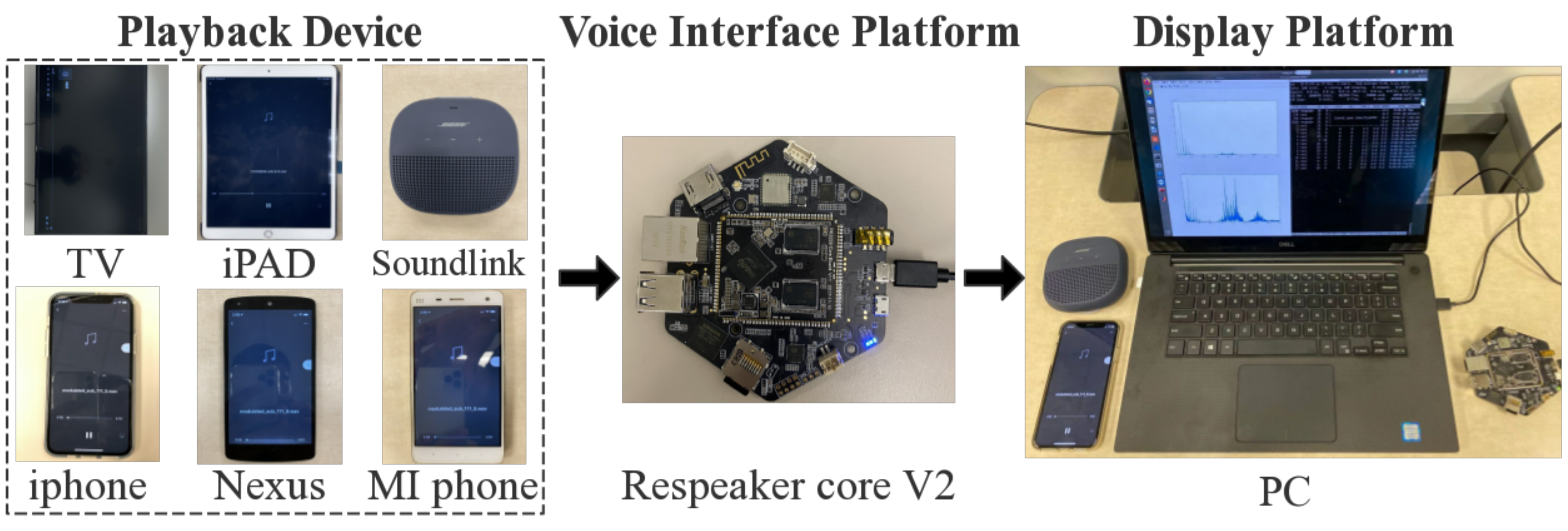}
  \caption{The testbed in our experiments.}
  \label{fig_testbed}
  \vspace{-0.15in}
\end{figure} 

\subsection{Experiment Setup}
\label{sec::evaluation}
We use a TASCAM DR-40 digital recorder for collecting the voice signals. The sampling rate of the digital recorder is set to 96 kHz by default.
% speaker setting
%For the playback devices, We verify the generalization ability} of our method by testing 
{We conduct real experiments with} a variety of  {common electronic devices in our lives}, including iPhone X, iPad Pro, Mi Phone 4, Google Nexus 5, Bose Soundlink Micro, and Samsung UN65H6203 Smart TV.  {Figure~\ref{fig_testbed} shows the testbed in our experiments.}
%As they are the common electronic devices in our lives, 
We aim to demonstrate that both 
our attack and countermeasure scheme can be applied to  {various} speaker devices.
% software setting
%Besides, speaker response estimation, modulated audio generation, and replay attack detection are conducted at MATLAB R2017a environment on Dell XPS 15 with Intel Core i7-7700, 2.8GHz CPU with 16GB RAM.
{To generate modulated replay audios, we apply MATLAB to estimate the amplitude response and design the inverse filter for different speakers. Due to space constraints, we put the details in Appendix~\ref{append:spk}.

ASVspoof 2017~\cite{ASV2017} and ASVspoof 2019~\cite{ASVspoof2019} are two popular databases for replay attacks. However, we cannot convert the replay attack samples in these two databases into modulated replay attacks, due to the lack of information of replay devices. Instead, to conduct a fair comparison between modulated replay audio and classic replay audio, we collect an in-house dataset with 6 replay devices. For each of these replay devices, the dataset contains 222 modulated replay audios as well as 222 corresponding classic replay audios. All audio signals are collected in a quiet lab environment. We use 10-fold cross-validation accuracy as a metric since it can reflect the whole performance of the system.
Moreover, we implement the prototype of our defense \defenseName~ in C++ language and run it on a popular voice interactive platform, i.e., ReSpeakerCore v2.}
%To verify our method in the real world, we also implement a defense system prototype in the ReSpeaker Core v2 platform with C++ language. 

\subsection{Effectiveness of Modulated Replay Attacks}
\label{sec::exp_mod}

\begin{figure}[t]
    %\vspace{-0.1in}
    \centering
    \subfloat[Genuine Audio Collected from Human]{\includegraphics[width=2.7in]{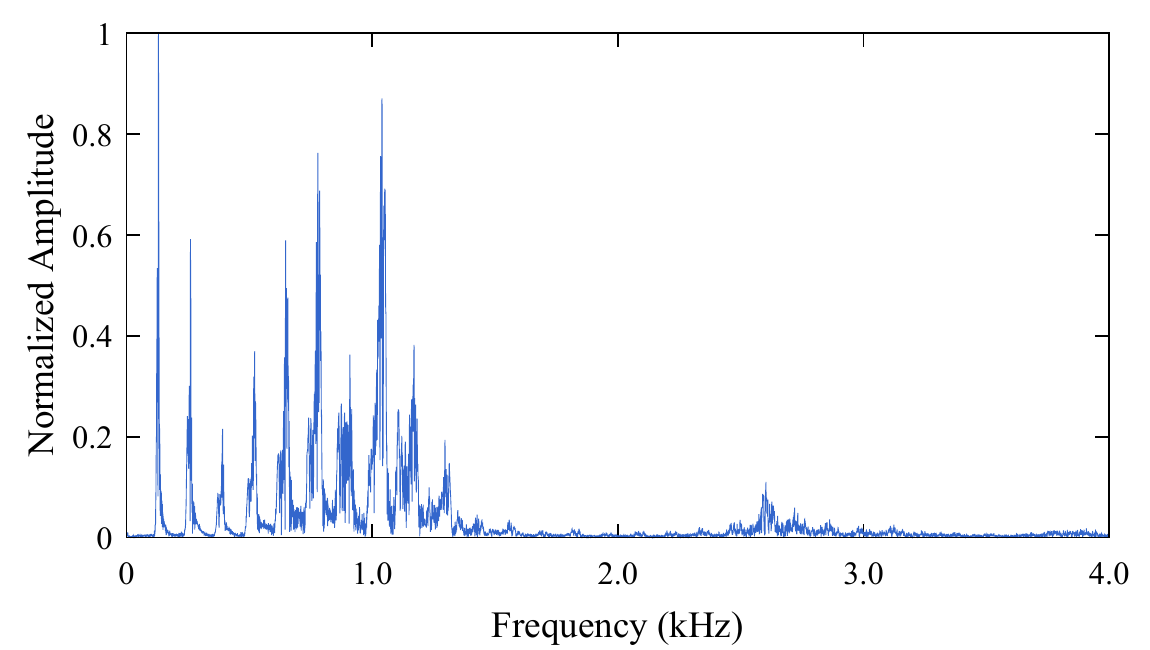}}
    \hfil  
    \subfloat[Direct Replay Audio]{\includegraphics[width=2.7in]{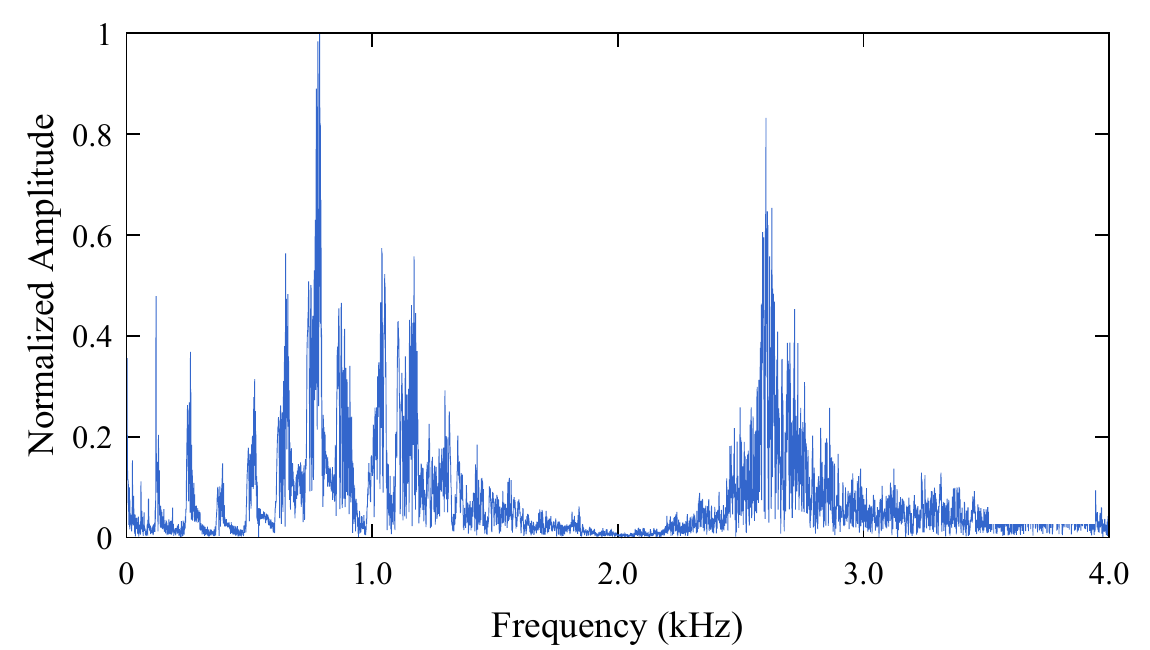}} 
    \hfil
    \subfloat[Modulated Replay Audio]{\includegraphics[width=2.7in]{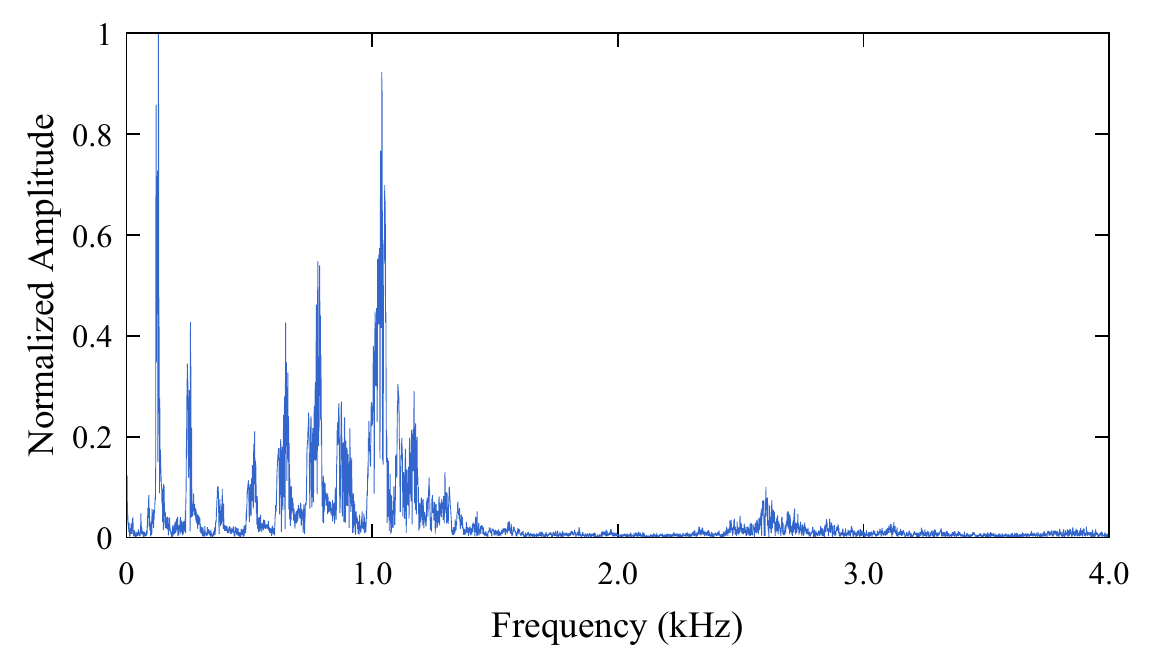}}
    \caption{Amplitude spectrum of audio signals.}
    \label{fig_exp1}
    \vspace{-0.18in}
\end{figure}

%After getting the amplitude response curves of the speakers and their inverse filters, 
We conduct experiments with the modulated replay attack.  {The attack leverages the inverse filter} to generate synthetic audio that has a similar frequency spectrum as the genuine audio. The modulated signals are generated in the Matlab environment and stored in a lossless format. They are then transferred to replay devices for performing attacks. Figure~\ref{fig_exp1} shows the amplitude spectrum of the signals during the modulated replay process in our experiments. %The experiments in Figure \ref{fig_exp1} 
{Here, the results are collected} using the iPhone device, while we have similar results with other devices. Figure~\ref{fig_exp1}(a) illustrates the genuine audio that is captured directly from a live human in a quiet room environment. The energy of genuine audio is mainly concentrated in the low-frequency range. Figure~\ref{fig_exp1}(b) shows the spectrum of the direct replay audio, which is captured from the direct playback of the genuine audio. Due to the response properties of the speaker devices, the high-frequency components in the direct replay audio have a higher relative proportion compared with those in the genuine audio. The spectrum difference is a vital feature in the various classic replay detection methods.

{Figure~\ref{fig_exp1}(c) shows the spectrum of the modulated replay audio collected by the ASR system. We can see that the low-frequency energy is greatly enhanced to cope with the speaker effects. Thus, the spectrum of the modulated replay audio is very similar to that of the genuine audio in Figure~\ref{fig_exp1}(a).} 
%However, we can obtain the modulated audio that is very similar to that of the genuine audio.} In the spectrum of the modulated audio, the low-frequency energy is greatly enhanced to cope with the speaker effects. An attacker can playback the modulated audio so as to launch an attack on ASR systems. The spectrum of the modulated replay audio, which is collected by the ASR systems, is shown in Figure \ref{fig_exp1}(c). As we can see from Figure \ref{fig_exp1}(a) and \ref{fig_exp1}(c), the spectrum of the modulated replay audio is very similar to that of the genuine audio.
% Two methods to prove the effectiveness。Statistic and MFCC-based Methods
%In order to prove the effectiveness of the modulated replay attack, we use two methods to evaluate the experimental results. One method is to use the statistics to 
%We measure the similarity between the modulated replay audio and the genuine audio. 
{Moreover, we quantify the similarity between the modulated replay audio and the genuine audio using the L2 norm comparison~\cite{L2Norm} that has been widely used to compare the spectrums of audio.} It is defined as $\left \| K_{1} - K_{2} \right \|_{2}^{2}$, where $K_{1}$ and $K_{2}$ are  {two normalized spectrum distributions of audio}, 
% {, respectively,} the normalized spectrum distributions of the modulated replay audio and the genuine audio, 
and $\left \| \cdot \right \|_{2}^{2}$ is the square of Euclidean distance.  {The smaller the L2 norm is, the more similar the two audios are.}  We measure the similarity values on 660 pairs of audio samples, the average similarity between the modulated replay audio and the genuine audio  is $1.768 \times 10^{-4}$. However, the average similarity between the direct replay audio and the genuine audio $S_{rg}$ is $15.71 \times 10^{-4}$ on average,  {which is much larger than the similarity between the modulated replay audio and the genuine audio}.  {The results demonstrate that the modulated replay audio is much more similar to the genuine audio.}

{Furthermore, we re-implement 8 popular detection methods that can be divided in three categories, namely, Cepstral Coefficients Features based defense, High-frequency Features based defense, and Low-frequency Features based defense. We apply those defense methods to detect both direct replay attacks and modulated replay attacks on 6 electronic devices, and the results in Table~\ref{tab:comp} show that our modulated replay attacks can bypass all these countermeasures.}

%Another method is to evaluate the test performance of frequency-based detection methods on the modulated replay audio. We test 7 existing methods as shown in Table \ref{tab:comp}.

\vspace{0.04in}
\noindent \textbf{Bypassing Cepstral Coefficients Features Based Defense. }
The most popular method  {to detect replay attacks} is based on cepstral coefficients features extracted from the signal amplitude spectrum. These cepstral coefficients features includes CQCC~\cite{CQCC}, MFCC~\cite{MFCC}, LPCC~\cite{Lav2017}, and MWPC~\cite{LPCC}.  {Our experiments show that the accuracy of detecting direct replay attacks} can always achieve over 88\% accuracy.  {However, Table~\ref{tab:comp} shows the accuracy significantly drops to 1.80\%$\sim$58.56\%} when detecting the modulated replay audio.   {The results indicate that our modulated attack can  bypass existing cepstral coefficients based detection methods.}

\begin{table*}[t]
    \centering
    \caption{\label{tab:comp} The accuracy of different defense methods on detecting direct replay attacks and modulated replay attacks.}
    \vspace{-0.1in}
    \renewcommand{\arraystretch}{1.3}
    \begin{threeparttable}[b]
      
        \begin{tabular}{|c|c|c|c|c|c|c|}
            % \toprule
            \hline
            \textbf{Detection Method\tnote{$\dagger$}}  & {iPhone}    & {iPad}             & {Mi Phone}           & {Google Nexus}        & {BOSE}            & {Samsung TV} \\
            \hline
            \hline
            {CQCC}~\cite{CQCC}                  &  95.95\% / 4.50\%\tnote{$\star$}  &  95.51\% / 6.31\%  &  92.18\% / 8.11\%  &  {89.93\% / 2.25\%}  &  91.90\% / 7.21\%  &  95.51\% / 6.76\%   \\ 
            \hline  
            {MFCC}~\cite{MFCC}                  &  {90.99\% / 15.51\%}  &  93.24\% / 18.92\%  &  89.64\% / 24.32\%  &  89.19\% / 27.03\%  &  91.89\% / 29.73\%  &  90.99\% / 27.71\%   \\        
            \hline
            {LPCC}~\cite{Lav2017}               &  {89.19\% / 8.11\%}  &  87.84\% / 9.91\%  &  90.09\% / 15.32\%  &  86.03\% / 18.92\%  &  87.84\% / 11.71\%  &  90.54\% / 11.26\%   \\ 
            \hline
            {MWPC}~\cite{LPCC}                  &  95.05\% / 46.85\%  &  {92.79\% / 36.04\%}  &  90.99\% / 53.15\%  &  95.05\% / 43.24\%  &  100.0\% / 50.45\%  &  86.93\% / 58.56\%   \\ 
            \hline
            \hline
            {Sub-band Energy}~\cite{energy_sep} &  89.61\% / 5.41\%  &  89.22\% / 4.50\%  &  89.70\% / 6.31\%  &  88.61\% / 10.81\%  &  {84.11\% / 0.00\%}  &  85.57\% / 0.90\%   \\ 
            \hline
            {HF-CQCC}~\cite{freq_solution}      &  90.91\% / 25.23\%  &  90.91\% / 22.52\%  &  90.91\% / 24.32\%  &  90.08\% / 18.02\%  &  93.94\% / 38.74\%  &  {93.94\% / 11.71\%}  \\ 
            \hline
            {FM-AM}~\cite{ModDynamic}           &  92.86\% / 7.21\%  &  92.86\% / 17.12\%  &  {89.29\% / 4.5\%}  &  92.86\% / 9.91\%  &  92.86\% / 35.14\%  &  96.43\% / 12.61\%   \\ 
            \hline
            \hline
            {Sub-bass}~\cite{subbass}           &  99.10\% / 7.66\%  &  {99.10\% / 4.50\%}  &  98.20\% / 5.80\%  &  98.65\% / 4.95\%  &  96.85\% / 6.76\%  &  97.30\% / 5.40\%   \\ 
            \hline
            \hline
            {DualGuard}                         &  91.00\% / 98.88\%  &  90.54\% / 98.32\%  &  {89.19\% / 97.75\%}  &  90.45\% / 98.22\%  &  90.10\% / 97.79\%  &  89.64\% / 99.65\%  \\
            \hline
        \end{tabular}
        
        \begin{tablenotes}
            \item[$\dagger$] The parameters of the different detection methods are listed in Appendix \ref{append:par}.
            \item[$\star$] The first number is on  direct replay attack and the second number is on modulated replay attack.  
        \end{tablenotes}
        
    \end{threeparttable}
    \vspace{-0.15in}
\end{table*}

\vspace{0.04in}
\noindent\textbf{Bypassing High-frequency Features Based Defense. }
As shown in Figure~\ref{fig_exp1}(a) and Figure~\ref{fig_exp1}(b), the high-frequency spectral features between the genuine audio and the replay audio are significantly different. Therefore, a number of methods~\cite{energy_sep,ModDynamic,freq_solution} detect replay attacks using high-frequency features, including Sub-band Energy~\cite{energy_sep}, HF-CQCC~\cite{freq_solution}, and FM-AM~\cite{ModDynamic}. Table~\ref{tab:comp} shows they can achieve high accuracy on detecting the direct replay attack, e.g., 96.43\%. However, they fail to detect the modulated attack due to frequency compensation. The highest accuracy on detecting the modulated replay attack is only 38.74\%.

\vspace{0.04in}
\noindent \textbf{Bypassing Low-frequency Features Based Defense. }
Besides detection based on high-frequency features, a recent study~\cite{subbass} provides an effective method, i.e. Sub-bass, to detect replay attacks based on low-frequency features. It defines a metric named \emph{energy balance metric}, which indicates the energy ratio of the sub-bass range (20-80 Hz) to the low-frequency range (20-250 Hz). Our experiments show that it can achieve 99.1\% accuracy on detecting direct replay attacks with the metric. However, the accuracy significantly drops to less than 8\% when detecting modulated replay attacks.

In these 8 detection methods above, MWPC performs better than other techniques. This is because MWPC can capture partial temporal information using the mel-scale Wavelet Package Transform (WPT)~\cite{WPT}, which handles the temporal signals on different scales. HF-CQCC can capture the high-frequency difference in signals. Such partial temporal information and high-frequency difference provide more useful features for the detection of replay attacks. Thus, MWPC and HF-CQCC perform better than other techniques. In addition, Table~\ref{tab:comp} also shows the experimental results of the modulated replay attack with six loudspeaker devices respectively. In theory, whatever frequency response a speaker has, we can always find the corresponding inverse filter to counteract the effect of the replay process. As a result, the modulated replay attack does not depend on any specific type of speaker. The experimental results in Table~\ref{tab:comp} validate our attack design. For any specific detection method, the modulated replay attack exhibits similar performance when leveraging different speaker devices. This property is critical for real-world replay attacks, because it demonstrates the modulated replay attack is independent of the loudspeaker. An attacker can utilize any common speaker in our lives to perform the modulated replay attack against ASR systems.

% In principle, whatever frequency response the speaker has, we can always find the corresponding inverse filter to counteract the effect of the speaker. So, our method does not depend on the type of speaker. In the experiments, we test six different loudspeaker devices, and find they all exhibit the similar performance in both attack and defense period. As a result, we show the most typical experimental results in the Section \ref{sec::exp_mod} (attack performance) and Section \ref{sec::defense} (defense performance). The details of the attack and defense data on different loudspeakers are illustrated in Table \ref{tab:attack} and Table \ref{tab:defense}. Figure \ref{fig_exp1} shows the signal spectrum of the direct replay audio and modulated replay audio via iPhone device. We identify the device model in Section \ref{sec::exp_mod}.
% Actually, the modulated replay attack works well for each individual loudspeaker. The detailed attack performance for different speakers is shown in Table \ref{tab:attack}. Hence, our attack does not rely on any special equipment. The only requirement for an attacker is to measure and compensate the amplitude response of a speaker.

\vspace{-0.05in}
\subsection{Effectiveness of Dual-Domain Detection}
\label{sec::defense}

Our defense, i.e. \defenseName{}, contains two parts: time-domain detection and frequency-domain detection. The time-domain detection mainly aims to identify modulated replay attacks and the frequency-domain detection mainly aims to identify direct replay attacks. We show the experimental results for these two parts, respectively. 

\vspace{0.0in}
\noindent \textbf{Time-Domain Detection.} 
We  {conduct experiments to evaluate the accuracy for \defenseName~to detect modulated replay attacks in the time domain. As the local extrema ratio (LER) is the key feature to detect replay attacks in the time domain, we first measure the LER values of both modulated replay audios and genuine audios from 6 different speaker devices.  }
%the modulated replay audio that comes from 6 different speaker devices. The local extrema ratios of both modulated replay audio and genuine audio are measured with different granularity. 

Figure~\ref{fig_timeexp} illustrates the change of LER value from fine granularity (with small window size) to coarse granularity (with large window size). %The max window size $r_{max}$ in our experiments is 20. 
{We can see that the LER decreases with the increase of  the window size. When the window size is small}, the LER value of the modulated replay audio is statistically larger than that of the genuine audio, which is the main difference between these two types of audios. As we mentioned in Section~\ref{sec::timedomain}, the relatively high LER value results from the ringing artifacts in the modulated replay audio.  {The results demonstrate the feasibility to detect the modulated replay attack in the time domain with the LER patterns.}

\begin{figure}[h]
 \vspace{-0.1in}
  \centering
  \includegraphics[width=2.75in]{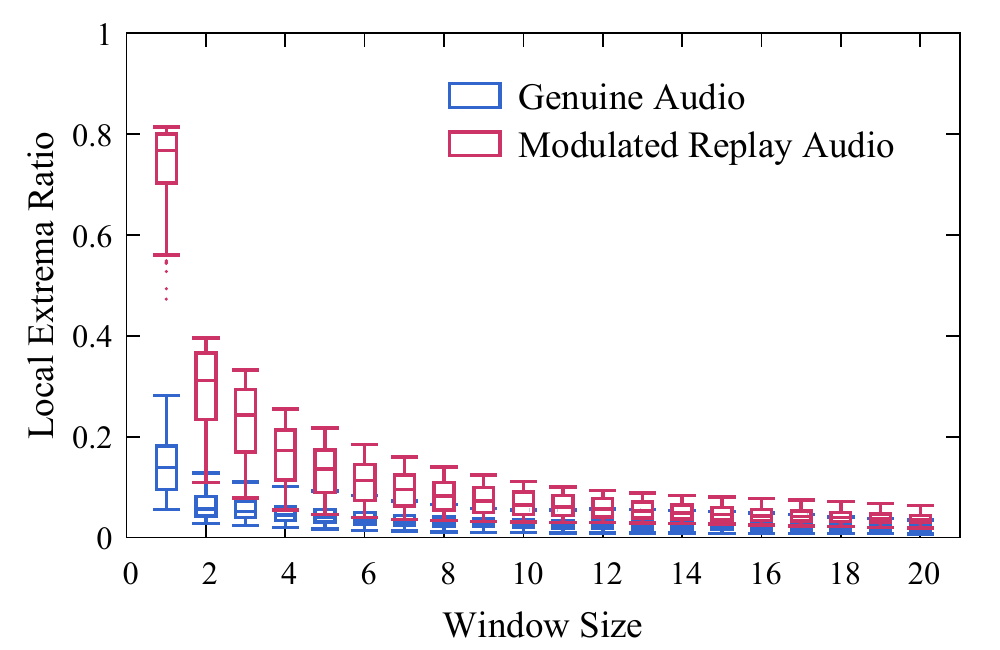}
  \vspace{-0.05in}
  \caption{20-dimensional local extrema patterns with different granularity for genuine and modulated replay audios.}
  \label{fig_timeexp}
  \vspace{-0.05in}
\end{figure}

{We conduct experiments to evaluate the detection accuracy in the time domain with Algorithm~\ref{alg:timingdetect}. As shown in Figure~\ref{fig_timeexp}, there are no significant differences for the LERs of the genuine audio and the modulated replay audio when the window size reaches 20. Thus, we choose a 20-dimensional tuple $\{$LER$_{1}$, LER$_{2}$, ..., LER$_{20}\}$ in our algorithm} as the feature to detect the modulated replay attack. Here, LER$_{r}$ denotes the LER value with the window size $r$.  {The detection accuracy of \defenseName~ on modulated replay attacks is shown in Table~\ref{tab:comp}. We can see that \defenseName~can accurately identify modulated replay attacks in the time domain. The detection accuracy for modulated replay attacks always exceeds  97\% with different speakers. We also calculate the false positive rate of our method in detecting modulated replay attacks. It always maintains less than 8\% false positive rate. The results demonstrate the generalization ability of \defenseName~with different speakers. Actually, the  generalization is due to the robust artifact properties in the time-domain signals (see Appendix~\ref{append:proof}). Our time-domain defense is independent of speakers. 
}
Our main contribution of time-domain defense is on the key feature extraction. For the experiments on comparing different classifiers, we refer the readers to Appendix~\ref{append:classifier}. In our defense, we choose SVM due to its high performance and easy deployment.
%To look into the feature space, we reduce the dimension of feature space by the principle components analysis (PCA) algorithm. By 3D visualization, the samples of two categories are nearly linear separable. As a result, we use a support vector machine (SVM) classifier to determine the audio type, because SVM uses the maximum margin principle that is suitable for our data distribution.
%For different speaker devices, the detection accuracy of the modulated replay audio varies from 97.7\% to 99.7\% (see Table \ref{tab:defense}). Although attackers can launch the modulated replay attack via unknown speakers, the artifact properties in the time-domain signals remain relatively robust. The total detection accuracy achieves 98.44\%, showing the generalization of our defense. %The whole defense scheme determines that an ASR system can only accept the audio that pass the dual-domain defense. %Combining these two defense methods together, the detection accuracy of the whole system against the adversarial audio (both normal replay audio and modulated replay audio) can achieve 88.720\%.

\vspace{0.03in}
\noindent \textbf{Frequency-Domain Detection.} We  {conduct experiments to evaluate the accuracy for \defenseName~to detect direct replay attacks in the frequency domain. To decide the decision threshold of Algorithm~\ref{alg:fredetect}, we first obtain the Area Under CDF curve (AUC) from the amplitude spectrum of audios. }
%with Algorithm~\ref{} to evaluate the detection accuracy for direct replay attacks in the frequency domain. In the frequency-domain detection, we test our lightweight program on the 660 pairs of genuine-replay samples. The frequency spectrum of signals are extracted by FFT algorithm so that we can get the AUC values from the amplitude spectrum CDF functions. 
Figure~\ref{fig_auc} shows the AUC distributions for both genuine audios and direct replay audios. We can see that the AUC values of  genuine audios are concentrated and close to 1, which indicates that the low-frequency energy  {is dominant}. However, the AUC values of direct replay audios are distributed and small, which is consistent with the distributed spectrum of replay audios. %These results also verify that the spectrum effects are caused by the speaker response. 
{As shown in Figure~\ref{fig_auc}, the best decision threshold is 0.817 since it can minimize the classification errors between genuine audios and replay audios.}
%it is obvious that the classification decision threshold  is the best with .
{Table~\ref{tab:comp} shows the detection accuracy of \defenseName~ on direct replay attacks using Algorithm~\ref{alg:fredetect} with a decision threshold of 0.817. The accuracy with different speakers always exceeds 89\%. We also calculate the false positive rate of our method in detecting direct replay attacks. It always maintains less than 5\% false positive rate. } Moreover, we conduct experiments with the ASVspoof 2017 and 2019 datasets to show that DualGuard can effectively detect classic replay attacks. Our experimental results show that DualGuard can achieve 87.13\% and 83.80\% accuracy in these two datasets, respectively. 
%Under this decision criteria, the total detection accuracy can achieve 90.15\%. In addition, the false negative rate (replay audio accepted by ASR) is 16.22\%, while the false positive rate (genuine audio rejected by ASR) is 4.51\%. The test results can be acceptable in the ASR environment. Another advantage of our algorithm is simple and lightweight. In our experiments, the delay time in processing an audio sample is 33 ms, which cannot be noticed by human in the practical speech applications.

% Data

\begin{figure}[t]
  \centering
  \includegraphics[width=2.75in]{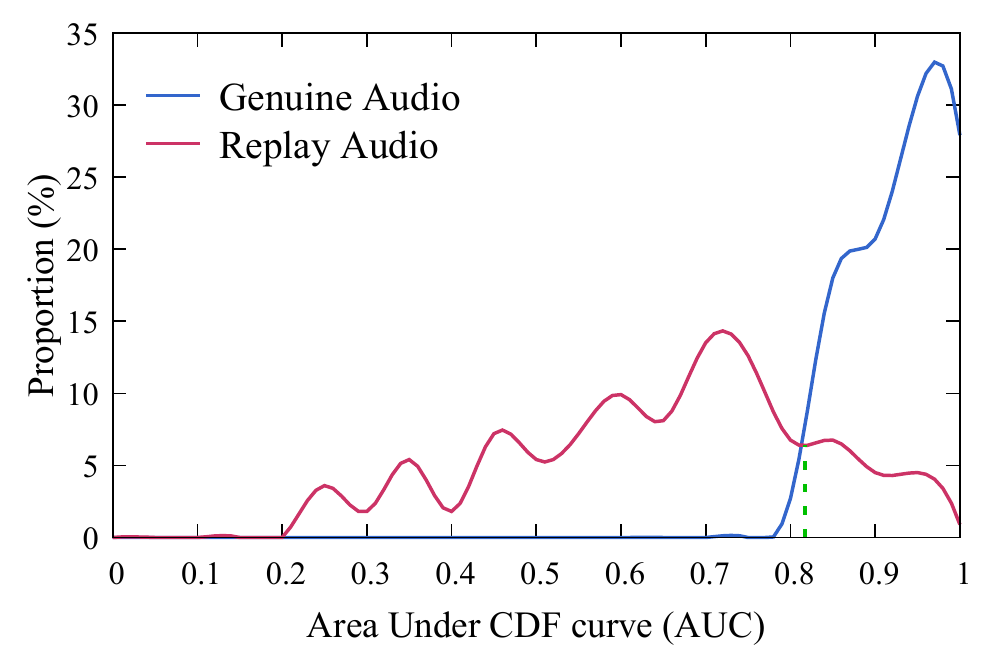}
  \caption{The AUC distribution of genuine audios and replay audios with the classification decision threshold.} 
  \label{fig_auc}
  \vspace{-0.2in}
\end{figure} 

{Moreover, we train another model only with frequency features from a mix of genuine audios, direct replay audios, and modulated replay audios in order to demonstrate the necessity to detect all replay attacks in two domains. Our experimental results show that the accuracy can only reach 63.36\%. It is due to the great spectral similarity of genuine audios and modulated replay audios in the frequency domain. Therefore, the dual-domain detection is necessary to accurately detect both two types of replay attacks.}
%and train a model with frequency features. 

%The training accuracy only achieves 63.36\% because of the great spectral similarity of genuine audio and modulated replay audio. That is the reason why we must use the two-domain features for detecting both the modulated replay attack and direct replay attack.

%===================================================

\vspace{-0.05in}
\subsection{Robustness of Dual-Domain Detection}\label{sec::robustness}
We conduct experiments to show the robustness of our dual-domain detection under different sampling rates, different recording devices, different speaker devices, and different noisy environments.

\vspace{0.03in}
\noindent \textbf{Impact from Genuine Audio Sampling Rate.} We evaluate the impact of the sampling rate for recording the initial human voice by attackers. We first use TASCAM DR-40 digital recorder with fs = 96 kHz to capture 
initial human voice. We also use iPhone X with fs = 48 kHz to capture human voice. For both sampling rates, the average detection accuracy of \defenseName{} on modulated replay attack is 98.05\%. That is because the sampling rate used by attackers only changes the spectral resolution in the modulation process. However, the waveform of modulated replay audios will not be changed since D/A converter
will convert modulated signals into analog form before the replay process. 
%The sampling rate used by attackers only affects the spectral resolution of interpolation, and will not change the waveform of modulated replay audio since DAC will convert signals into analog form before replay process.

\vspace{0.03in}
\noindent \textbf{Impact from ASR Sampling Rate.} %To verify that our defense can work on different devices, 
We conduct experiments on different recording devices with different sampling rates. In our experiments, there are three settings of  {sampling rates} for our  recording devices: (S1) TASCAM DR-40 with 96 kHz, (S2) TASCAM DR-40 with 48 kHz, and (S3) a mobile phone (Xiaomi 4) with 44.1 kHz. 
 {Figure \ref{fig_detectacc}(a) shows the experimental results.}  %From the control experiments of S1 and S2 where we use the same recorder with different sampling rates, 
We can see the detection accuracy usually  {increases with the increase of sampling rates. We find that although} changing the sampling rate has little effect on the frequency-domain detection, it significantly affects the time-domain detection due to the change of the sampling interval. Note that the smaller sampling interval means the finer detection granularity of local extrema ratios, which increases the detection accuracy. %because of the better distinction at fine granularity (see Figure \ref{fig_timeexp}). 
{Moreover, in Figure \ref{fig_detectacc}(a),} our experiments show that \defenseName~still achieves around 85\% detection accuracy in the worst case where the sampling rate is 44.1 kHz. We note that 44.1 kHz is the minimum sampling rate of common electronic devices in our lives~\cite{44100Hz}.  {Therefore, \defenseName~can achieve a good detection accuracy with different sampling rates in common devices.} %Therefore, the experiments (C3) provide a lower bound of our system performance.

\begin{figure}[t]
    \vspace{-0.1in}
    \centering
    \subfloat[accuracy vs. sampling rate.]{\includegraphics[width=1.66in]{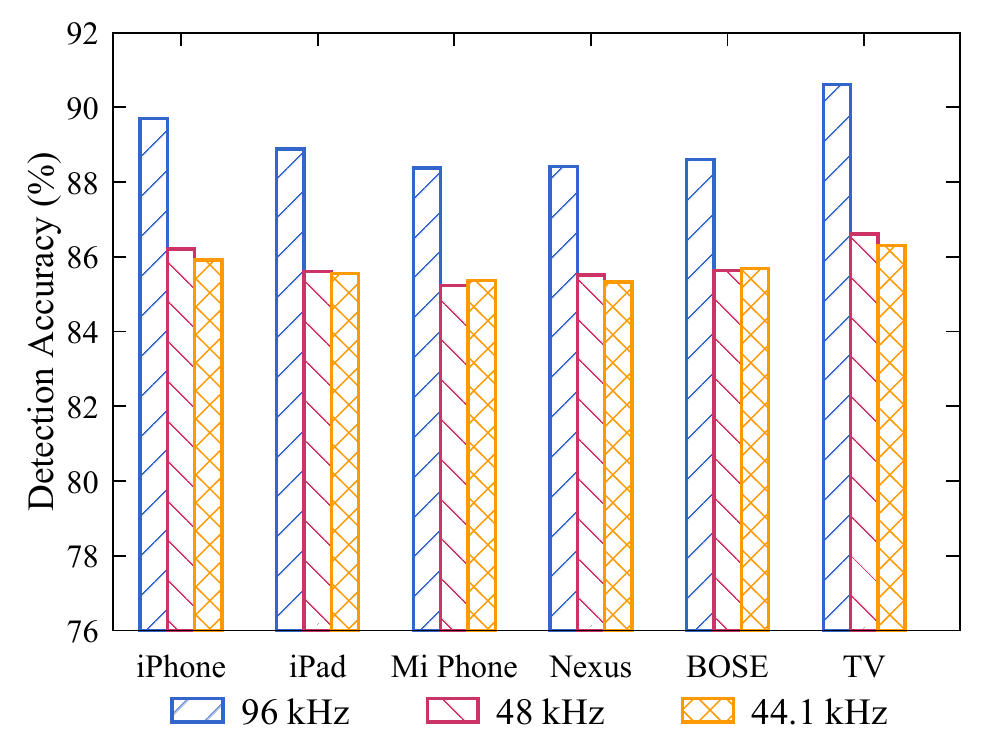}%
    \label{fig_sampling}}
    \subfloat[accuracy vs. noise level.]{\includegraphics[width=1.66in]{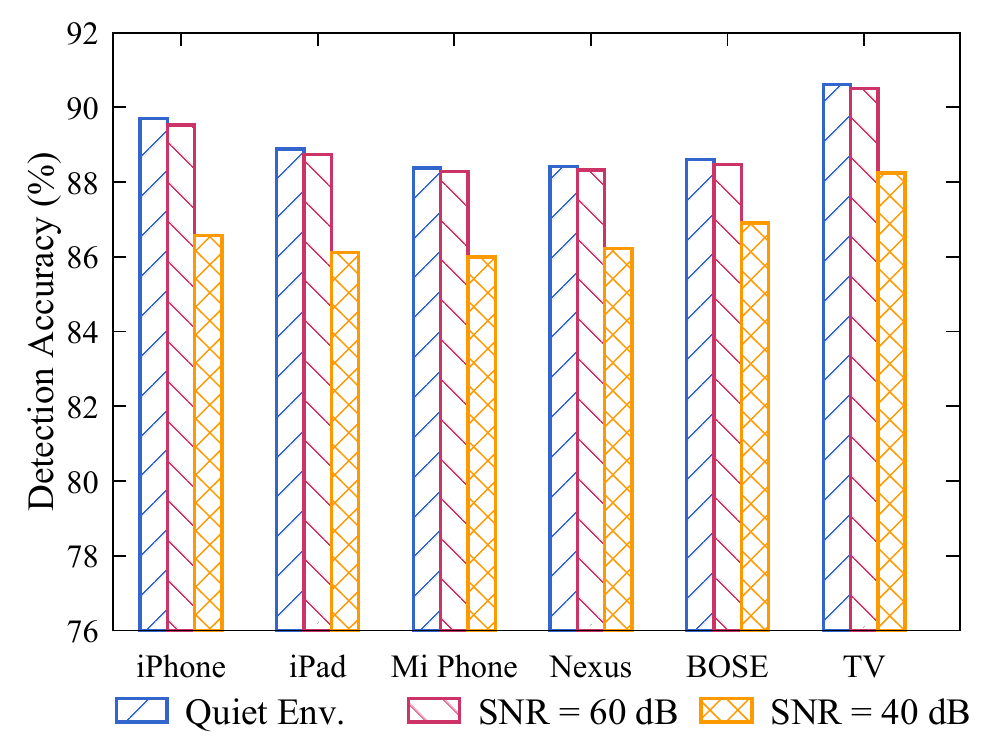}%
    \label{fig_noise}}
    \caption{Detection accuracy of different recording devices with different factors.}
    \label{fig_detectacc}
    \vspace{-0.15in}
\end{figure}

%\begin{figure}[t]
%  \centering
%  \includegraphics[width=3.0in]{acc.pdf}
%  \caption{Detection accuracy of different recording devices with different sampling rates.}
%  \label{fig_sampling}
%  \vspace{-0.1in}
%\end{figure}
  
%\begin{figure}[t]
%  \centering
%  \includegraphics[width=3.0in]{noise.pdf}
%  \caption{Detection accuracy in various noise environments.}
%  \label{fig_noise}
%  \vspace{-0.15in}
%\end{figure}  

%The result of another set of experiments (S2 and S3), 

\vspace{0.03in}
\noindent \textbf{Impact from Different Recording Devices.}  {In Figure~\ref{fig_detectacc}(a), the detection accuracy does not significantly change when we use different recorders with the same sampling rate. The detection accuracy changes less than 2\% with different recording devices when the sampling rate is 48 kHz or 44.1 kHz. The results show \defenseName~can be applied to different recording devices in our lives.}
%where we use different recorders with similar sampling ratios, indicates that recording devices do not affect our detection performance. So, we can conclude that our defense can be applied to different devices. 

%\noindent \textbf{Impact from Different Loudspeaker  Devices.} \rvr{Put the results here. Add description. Delete Table 3, and I suggest changing it to a histogram.}

\vspace{0.03in}
\noindent \textbf{Impact from Different Noisy Environments.} To test the detection accuracy under different noisy environments, we introduce noise factors in our experiments. We test our detection method under three scenarios: (1) in a quiet environment, (2) in a noisy environment with the signal-to-noise ratio (SNR) of 60 dB, and (3) in a noisy environment with the SNR of 40 dB.
The additive noise signal is produced by a loudspeaker that plays a  pre-prepared Gaussian white noise signal, simulating the noise in the real world. The noise is mixed with the test signals with specific SNR. 
Figure~\ref{fig_detectacc}(b) illustrates the detection accuracy in various noise conditions.  {We can see that the impact of noise is limited. Particularly}, the detection accuracy remains unchanged when the SNR is 60 dB. When the SNR drops to 40 dB, the detection accuracy decreases by 3.2\% on average. Actually, the impact of noise is mainly reflected in the time-domain defense. General noise has little effect on the frequency-domain defense part. With the increase of noise power, the burr amplitude in the noise will also increase. As a result, noise can result in the imprecise detection of the local extrema pattern in the test signals. However, our experimental results indicate that \defenseName{} still works well at the general ambient noise level.

\vspace{-0.05in}
\subsection{ {Overhead of Dual-Domain Detection}}
%We test our performance overhead for detecting modulated replay attacks. %For modulated replay attacks, most of the working overhead focuses on the measurement of the speaker response. After collecting the output signals for the single-frequency inputs, it takes 93.5 s to obtain the interpolated speaker response and the inverse filter. Under our computing environment, the modulation time for a signal of 1s length is 200ms on average.
 
%For the dual-domain defense, 
%the testing time for one sample is 33 ms in our experiments. The amount of memory used by the running program is 10 MB. Furthermore,
We implement \defenseName~with C++ language, and build a system prototype in ReSpeaker Core v2, which is a popular voice interactive platform with quad-core ARM Cortex-A7 of 1.5GHz and 1GB RAM on-board.  {Our experimental results show that} the embedded program takes 5.5 $ms$ on average to process a signal segment of 32 $ms$ length with the CPU usage of 24.2\%. The largest memory usage of the program is 12.05 MB.  {The results demonstrate} the feasibility of applying our dual-domain detection system in the real world.

%% 33ms on MATLAB  172/sample3

%=========================================================================================
%\input{5Discussion.tex}
\vspace{-0.05in}
\section{Related Work}\label{sec:related}
In this section, we review related research on attacks targeting ASR systems, techniques on loudspeaker frequency response compensation, and defense systems against replay attacks, respectively.
 
\vspace{0.03in}
\noindent \textbf{Attacks on Speaker Dependent ASRs.} A speaker dependent ASR system is designed to only accept voice commands from specific users~\cite{SpoofingASR}. It verifies the speaker's identity by matching the individual characteristics of human voice. There are four main spoofing attacks against the speaker dependent ASRs. First, an attacker can physically approach a victim's system and alter its voice to impersonate the victim~\cite{Impersonation}. Second, the attacker can launch a simple replay attack by playing back a pre-recorded speech of the victim to the ASR systems~\cite{REPLAY2, REPLAY1}. Third, speech synthesis attacks generate artificial speech to spoof the ASR systems~\cite{SyntheticSpeech, Synthetic, AdversarySamples}. Fourth, speech conversion attacks aim to achieve a speech-to-speech conversion, so that the generated speech has the same timbre and prosody with the victim speech~\cite{VConversion1, VConversion2}.

\vspace{0.03in}
\noindent \textbf{Attacks on Speaker Independent ASRs.} A speaker independent system is designed to accept commands from any person without identity verification. Comparing to the speaker dependent system, it is more vulnerable to attacks~\cite{GVSAttack,A11Y,Monkey,AuDroid}. Recently, researchers found more surreptitious attacks that humans cannot easily perceive or interpret. Dolphin attack is hard to be noticed since the malicious audio is modulated into the ultrasonic range~\cite{Dolphin, inaudible_0, InaudibleVC}. The voice commands can also be modulated into laser light to launch audio injection attack~\cite{LightCommands}. Also, the malicious audio can be perturbed into an unintelligible form in either time domain or frequency domain~\cite{SPHidden}. To attack the machine learning module in ASRs, recent research shows attackers can produce noise-like~\cite{HidCmd, HiddenVCS, CocaineNoodles, CCS19POSTER} or song-like~\cite{CmdSong} voice commands that cannot be interpreted by human. Psychoacoustic model can also be applied to generate the adversarial audio below the human perception threshold~\cite{Psychoacoustic}. By fooling the natural language processing (NLP) module after ASRs, skill squatting attacks mislead the system to launch malicious applications~\cite{SkillSquatting, DangerousSkills, MIM, Paralinguistics, AfterASR}.

%\vspace{0.03in}
\noindent \textbf{Loudspeaker Frequency Response Compensation.} In the field of room acoustics, loudspeaker frequency response compensation is a technique used to improve the sound reproduction~\cite{Equalization0}. The basic method is to design an intelligent filter to flatten the frequency response of the loudspeakers~\cite{Equalization1}. The frequency response compensation can also be achieved by advanced filter with a generic Hammerstein loudspeaker model~\cite{Equalization2}. For a multichannel loudspeaker system, the minimax approximation method is proposed to flatten the spectral response around the crossover frequency~\cite{Equalization3}. Also, a polynomial based MIMO formulation is proposed to solve the multi-speaker compensation problem~\cite{Equalization4}. 

\vspace{0.03in}
\noindent \textbf{Defenses against Replay Attacks.} In ASVspoofing Challenge~\cite{ASV2017}, several replay detection methods are proposed by exploiting the frequency-based features, such as Linear Prediction Cepstral Coefficient (LPCC)~\cite{LPCC}, Mel Frequency Cepstral Coefficient (MFCC)~\cite{MFCC}, Constant Q Cepstral Coefficients (CQCC)~\cite{CQCC}, High Frequency Cepstral Coefficients (HFCC)~\cite{HFCC} and Modified Group Delay Cepstral Coefficient (MGDCC)~\cite{MGDCC}. 
Besides, the high-frequency sub-band features can be used to detect live human voice by the linear prediction (LP) analysis~\cite{freq_solution}. The sub-bass (low-frequency range) energy is also an effective feature to detect the replay signals, though this method can be bypassed by altering the speaker enclosure or modulating the signals with our inverse filter~\cite{subbass}. The frequency modulation features~\cite{FM1, FM2, ModDynamic} can also be leveraged due to the degraded amplitude components of replay noise.

Researchers propose to detect replay attacks using physical properties. Gong et al. detect the body-surface vibration via a wearable device to guarantee the voice comes from a real user~\cite{AcousticCues}. 2MA~\cite{2MA} verifies the voice commands by sound localization using two microphones. Yan et al. propose a spoofing detection method based on the voiceprint difference between the authentic user and loudspeakers~\cite{CCS19Fieldprint}. All these methods require special equipment or specific scenarios. VoiceLive~\cite{VoiceLive} detects live human voice by capturing the time-difference-of-arrival (TDoA) dynamic of phoneme sound locations. VoiceGesture~\cite{Liveness} reuses smartphones as a Doppler radar and verifies the voice by capturing the articulatory gesture of the user when speaking a passphrase. However, these two methods work well only when there is a short distance between the recorder and the user’s mouth. 
\vspace{-0.05in}
\section{Conclusion}\label{sec:conclusion}

In this paper, we propose a new modulated replay attack against ASR systems. This attack can bypass all the existing replay detection methods that utilize different frequency domain features between electronic speakers and humans. We design an inverse filter to help compensate frequency distortion so that the modulated replay signals have almost the same frequency features as human voices. To defeat this new attack, we propose a dual-domain defense that checks audio signal's features in both frequency domain and time domain. Experiments show our defense can effectively defeat the modulated replay attacks and classical replay attacks.% and the practicability of the dual-domain defense.

%The frequency-domain defense exploits the power spectrum distribution as features, which aims to defeat the classic replay attack. The time-domain defense measures the local extrema patterns, because the modulated replay signals contain spurious oscillations that increase the local extrema number. 

%-----------------------------------------------------------------------------------------
\vspace{-0.05in}
\section*{Acknowledgments}
This work is partially supported by the U.S. ARO grant W911NF-17-1-0447, U.S. ONR grants N00014-18-2893 and N00014-16-1-3214, and the NSFC grants U1736209 and 61572278. Jiahao Cao and Qi Li are the corresponding authors of this paper. 
%-----------------------------------------------------------------------------------------
\bibliographystyle{ACM-Reference-Format}
\bibliography{paperref}
\begin{appendices}  
\section{Mathematical Proof of Ringing Artifacts in Modulated Replay Audio}\label{append:proof}
%===========================================================================
\begin{theorem} \label{theory:1}

Uncertainty Principle: It is hard to accurately determine the entire frequency response of a loudspeaker.

\end{theorem}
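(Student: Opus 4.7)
The plan is to formalize the measurement procedure described in Section~\ref{sec:if} and then show that two independent sources of uncertainty combine to prevent any finite procedure from recovering the continuous response $H(f)$ exactly. The first source is the \emph{time--frequency uncertainty principle} (the Gabor--Heisenberg inequality): any stimulus signal $x(t)$ of finite effective duration $\Delta t$ and effective bandwidth $\Delta f$ must satisfy $\Delta t \cdot \Delta f \geq \tfrac{1}{4\pi}$. The second source is the \emph{discrete sampling} of the frequency axis forced by the fact that only finitely many test tones $\{f_k\}$ can be played in finite time. I would set up the argument so that these two obstructions are shown to be unavoidable individually, and then argue that together they preclude exact knowledge of $H(f)$ on the whole speech band.

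First I would formalize the measurement. A single tone $x_k(t) = A\sin(2\pi f_k t)\cdot w_T(t)$, where $w_T$ is a window of length $T$, produces an output whose amplitude spectrum at $f_k$ is the observed gain $\hat H(f_k)$. I would then show, by computing the Fourier transform of $x_k$, that the observed spectrum is a convolution of the ideal delta at $f_k$ with the transform $W_T$ of the window, so the measured value at $f_k$ is actually a weighted average of $H$ over an interval of width $\Theta(1/T)$ around $f_k$. This makes the Gabor inequality concrete in our setting: shrinking the frequency smear requires growing $T$ without bound, but any physical measurement has $T < \infty$, so $\hat H(f_k) \neq H(f_k)$ in general.

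Second, I would argue the discrete-sampling obstruction. With only countably many tones $\{f_k\}_{k=1}^{K}$, the recovered response $H(f)$ off the sample grid must be reconstructed by interpolation (the cubic spline of Section~\ref{sec:if}). I would exhibit the standard fact that interpolation determines $H$ only up to the null space of the sampling operator: two distinct smooth responses that agree on $\{f_k\}$ but differ between grid points induce indistinguishable measurements, so the inverse filter $H^{-1}(k)$ built from any finite table cannot equal the true pointwise inverse. Composed with the first obstruction, the inequality $H^{-1}(k)\cdot H(k) = C$ used in Section~\ref{sec:if} can hold only approximately, with an error term depending on $T$ and the grid spacing $\Delta f$.

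The main obstacle will be pinning down the informal phrase \emph{``hard to accurately determine''} into a quantitative statement strong enough to feed into the later ringing-artifact theorem without overclaiming. I would phrase the conclusion as an explicit error bound of the form $|\hat H(f) - H(f)| \geq \Omega\!\left(\tfrac{1}{T}\right) + \Omega(\Delta f \cdot \|H'\|_\infty)$, valid uniformly over the speech band, so that subsequent theorems can treat the residual phase/amplitude error as a strictly positive quantity. A secondary subtlety is that loudspeakers are not strictly linear time-invariant systems (harmonic distortion, thermal drift), which I would handle by restricting attention to the small-signal LTI regime and noting that any departure from LTI only strengthens the impossibility claim.

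\end{appendices}
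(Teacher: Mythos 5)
Your proposal proves a true statement, but not the one the paper actually needs, and it misses the central idea of the paper's own proof. The paper's argument is almost entirely about the \emph{phase response}: it concedes that the amplitude response $\{G_n\}$ can be measured reasonably well by the tone-sweep procedure of Section~\ref{sec:if}, and locates the unavoidable uncertainty in the phase $\{\psi_n\}$, which cannot be observed directly because the loudspeaker's output is an acoustic wave rather than an electrical signal. The three concrete obstructions the paper gives are (i) time of flight, i.e.\ the propagation delay $t = L/v_0$ adds a frequency-dependent phase that cannot be subtracted without knowing $L$ exactly; (ii) time incoherence of real multi-driver loudspeakers; and (iii) the unknown phase response of the receiving sensor, which is inseparably convolved into any measurement. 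Your two obstructions --- Gabor smearing of finite-duration test tones and interpolation error off the discrete frequency grid --- concern only the \emph{amplitude} estimate $\hat H(f_k)$ versus $H(f_k)$. That does technically establish the literal claim that the ``entire'' response cannot be recovered exactly, but it is not the fact that Theorems~\ref{theory:2} and~\ref{theory:3} consume.

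The gap matters downstream. Theorem~\ref{theory:2} models the attacker's estimate as $\hat H = \{\hat G_n, 0\}$ with $G_n/\hat G_n \approx 1$: the amplitude is assumed to be compensated essentially correctly, and the ringing artifacts of Theorem~\ref{theory:3} arise purely from the uncompensated phase shifts $\psi_n$, with the artifact amplitude bounded by $\sum_n A_n |\psi_n|$ in Equation~(\ref{eq:constraint}). If Theorem~\ref{theory:1} were proved only via amplitude-measurement error, as you propose, there would be nothing in the development establishing that $\psi_n \neq 0$ and is unknown to the attacker, and the later proofs would be unsupported. Your proposed quantitative bound $|\hat H(f) - H(f)| \geq \Omega(1/T) + \Omega(\Delta f \cdot \|H'\|_\infty)$ is also stronger than what can honestly be claimed (a lower bound on the error of \emph{every} estimator would require a genuine minimax argument, and for smooth $H$ the interpolation error can be made arbitrarily small by refining the grid); the paper deliberately avoids such a claim and instead argues impossibility only for the phase. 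To repair your proof, you would need to add the acoustic-measurement argument: any phase measurement necessarily composes the loudspeaker's phase response with an unknown propagation delay and an unknown receiver phase response, so the loudspeaker's own $\psi_n$ cannot be isolated.
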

%---------------------------------------------------------------------------
\begin{proof}

The frequency response of a loudspeaker contains amplitude response and phase response. The measurement of amplitude response is demonstrated in Section~\ref{sec:if}. However, it is difficult to accurately measure the phase response.

For an electronic circuit system, the phase response can be measured by observing the electric signals $x_{out}(t)$ and $x_{in}(t)$ with an oscilloscope. But in a loudspeaker system, we cannot measure the phase response directly because the output signal $x_{out}(t)$ is a sound wave. Other equipment (such as a receiver that converts sound wave to electronic signal) is required to complete the measurement. But the measuring system can introduce other phase differences. There are mainly three influence factors:

(1) Time of flight. The propagation time will add phase differences. It is important to know the accurate delay time $t = L/v_{0}$, where $L$ is the direct distance between the speaker and the sensor. The sound speed $v_{0} \approx 344 $m/s (@20$^{\circ}$C).

(2) \emph{Time incoherence}. Most of the available loudspeakers are not time coherent, which will exhibit phase error in the measurement.

(3) \emph{Phase response of receiving sensor}. The phase response of receiving sensor is typically unknown, which will also introduce phase shifts.

As a result, the accuracy of phase response measurement cannot be guaranteed. That means the entire frequency response cannot be accurate. Also we can prove that even small measurement errors for phase response can cause ringing artifacts (see Theorem 3).

\end{proof}

%===========================================================================
\begin{theorem} \label{theory:2}

Compared to the genuine signal $x(t)$, there are phase shifts for each frequency component in the modulated replay signal $x_{mr}(t)$.

\end{theorem}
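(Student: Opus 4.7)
The plan is to trace the phase spectrum through every stage of the modulated replay pipeline and isolate the one stage that injects an uncompensated phase error. Once the correct signal-flow model is in hand the argument is essentially a bookkeeping computation on Fourier representations; the only subtlety is invoking Theorem~\ref{theory:1} at the right moment. First I would write the genuine signal as $x(t)=\sum_n A_n \sin(2\pi f_n t + \phi_n)$ and the modulated replay signal as $x_{mr}(t)=\sum_n A_n' \sin(2\pi f_n t + \phi_n')$, so that the claim reduces to showing $\phi_n' \neq \phi_n$ for each frequency index $n$ in the speech band.

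Next, using the description of the modulation processor in Section~3.3, I would observe that the processor multiplies only the magnitude spectrum $X_m(k)$ by the inverse filter $H^{-1}(k)$ and reuses the original phase spectrum $X_p(k)$ verbatim inside the iFFT. Hence the digital waveform $y(n)$ fed to the D/A converter has, at bin $k$, phase exactly $X_p(k)$, inheriting the genuine signal's phase unchanged. I would then model the physical loudspeaker as an LTI system with complex transfer function $H(k)=|H(k)|\, e^{j\phi_s(k)}$. When $y(n)$ is played back, every frequency component acquires an additive phase $\phi_s(k)$, so the acoustic waveform captured by the ASR system has spectrum proportional to $X_m(k)\,H^{-1}(k)\,|H(k)|\,e^{j(X_p(k)+\phi_s(k))}$. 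Reading off the phase of component $k$ in $x_{mr}(t)$ gives $\phi_n + \phi_s(f_n)$, whereas the corresponding component of $x(t)$ carries phase $\phi_n$; the per-component phase shift is therefore exactly $\phi_s(f_n)$. Crucially, the modulation processor never attempts to cancel $\phi_s$: by construction it only touches magnitudes.

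Finally, I would appeal to Theorem~\ref{theory:1} to argue that these shifts cannot be driven to zero even in principle. Because $\phi_s(f)$ cannot be accurately measured, owing to time-of-flight ambiguity, speaker time incoherence, and the unknown phase response of the receiving sensor, the attacker has no way to pre-distort the recorded phase spectrum so as to cancel $\phi_s(f_n)$. Combined with the physical fact that any realistic loudspeaker exhibits a non-constant, nonzero phase response throughout the audible band, this gives $\phi_s(f_n)\neq 0$ generically, so $\phi_n' = \phi_n + \phi_s(f_n) \neq \phi_n$. The main obstacle I foresee is formalizing "for each frequency component" honestly: strictly speaking one must exclude the (typically measure-zero) set of frequencies at which $\phi_s$ could incidentally vanish. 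I would handle this by stating the conclusion for every $f_n$ in the speech band at which the speaker's phase response is nonzero, and citing the structural reasons from Theorem~\ref{theory:1} to justify that this is the generic case rather than an exception.
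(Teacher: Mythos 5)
Your proposal is correct and follows essentially the same route as the paper's proof: the modulation processor compensates only the magnitude spectrum, the real loudspeaker applies an uncompensated additive phase $\psi_n$ (your $\phi_s(f_n)$) to each component, and Theorem~\ref{theory:1} is invoked to show this phase cannot be measured and pre-cancelled. Your explicit caveat about frequencies where the phase response incidentally vanishes is a small refinement the paper leaves implicit, but it does not change the argument.
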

%---------------------------------------------------------------------------
\begin{proof}

In the modulated replay attack, the inverse filter only needs to compensate the amplitude spectrum because the features (e.g. CQCC, MFCC, LPCC) in the existing defenses only derives from the amplitude spectrum. However, a loudspeaker has a non-zero phase response in the real world, though it cannot be accurately measured (see Theorem~\ref{theory:1}).

Suppose the genuine audio x(t) is a digital signal. Through the fast Fourier transform, x(t) would be decomposed as $N$ frequency components with the frequency set $\{ f_{1}, f_{2},...,f_{N}\}$. The frequency spectrum of $x(t)$ is denoted as $\{ A_{n}, \varphi_{n} \}$, where $\{ A_{n} \}$ is the amplitude spectrum while $\{ \varphi_{n} \}$ is the phase spectrum. So, $x(t)$ can be represented as 

\begin{equation}
x(t) = \sum_{n} A_{n} \cdot \sin(2 \pi f_{n} t + \varphi_{n}).
\end{equation}

Assume that the frequency response of the loudspeaker is $H = \{ G_{n}, \psi_{n} \}$, where $\{ G_{n} \}$ is the amplitude response while $\{ \psi_{n} \}$ is the phase response. By measuring the input and output test signals, attacker can achieve the estimated frequency response $\hat{H} = \{ \hat{G_{n}}, 0 \}$. 

The inverse filter is then designed based on $\hat{H}$, denoted as $I = \hat{H}^{-1} = \{ \hat{G_{n}}^{-1}, 0 \}$. As a result, the generated modulated audio would be 

\begin{equation}
x_{m}(t) = \sum_{n} (A_{n}/\hat{G_{n}}) \cdot \sin(2 \pi f_{n} t + \varphi_{n}).
\end{equation}

If the loudspeaker is ideal that does not have phase shift effects. And the amplitude estimation is enough accurate. The estimated replay output of the modulated audio would be 

\begin{equation}
\begin{aligned}
\hat{x_{mr}}(t) & = \sum_{n} (A_{n} \cdot G_{n} /\hat{G_{n}}) \cdot \sin(2 \pi f_{n} t + \varphi_{n})\\
& \approx \sum_{n} A_{n} \cdot \sin(2 \pi f_{n} t + \varphi_{n}) = x(t),
\end{aligned}
\end{equation}

\noindent which is approximately equal to the genuine audio.

However, if the modulated audio $x_{m}(t)$ passes through the real loudspeaker system $H$, the real modulated replay $x_{mr}(t)$ audio would be

\begin{equation}
\begin{aligned}
x_{mr}(t) & = \sum_{n} (A_{n} \cdot G_{n} /\hat{G_{n}}) \cdot \sin(2 \pi f_{n} t + \varphi_{n} + \psi_{n})\\
& \approx \sum_{n} A_{n} \cdot \sin(2 \pi f_{n} t + \varphi_{n} + \psi_{n}) \neq x(t).
\end{aligned}
\end{equation}

Because $x_{mr}(t)$ has almost the same amplitude spectrum with the genuine audio $x(t)$, it can bypass the existing defense systems. However, compared to the genuine signal $x(t)$, there are phase shifts for each frequency component in the modulated replay signal $x_{mr}(t)$.

\end{proof}

%===========================================================================
\begin{theorem} \label{theory:3}

The phase shifts will cause the spurious oscillations (ringing artifacts) in the original audio.

\end{theorem}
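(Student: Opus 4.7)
The plan is to write the modulated replay signal as the genuine signal plus an explicit residual, express that residual in closed form in terms of the unknown phase shifts $\psi_n$, and then argue that this residual inevitably shows up as spurious oscillations whose amplitude is bounded by the amplitude spectrum and the absolute phase shifts. First I would subtract the two Fourier representations provided by Theorem~\ref{theory:2} to obtain
\[
\Delta(t) \;=\; x_{mr}(t) - x(t) \;=\; \sum_{n} A_n \bigl[\sin(2\pi f_n t + \varphi_n + \psi_n) - \sin(2\pi f_n t + \varphi_n)\bigr],
\]
and apply the sum-to-product identity $\sin A - \sin B = 2\cos(\tfrac{A+B}{2})\sin(\tfrac{A-B}{2})$ term by term to rewrite
\[
\Delta(t) \;=\; \sum_{n} 2 A_n \sin(\psi_n/2)\,\cos\!\bigl(2\pi f_n t + \varphi_n + \psi_n/2\bigr).
\]
This exhibits $\Delta(t)$ as a trigonometric polynomial on the same frequency grid $\{f_n\}$ as $x(t)$, with per-frequency coefficients $2A_n \sin(\psi_n/2)$ that are nonzero whenever $\psi_n \not\equiv 0 \pmod{2\pi}$, precisely the situation Theorem~\ref{theory:1} says cannot be avoided in practice.

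Next I would argue that $\Delta(t)$ acts as a ringing perturbation superimposed on the genuine waveform. Because the coefficients at higher-frequency bins do not cancel, $\Delta(t)$ carries oscillatory energy spread across the passband, so $x_{mr}(t) = x(t) + \Delta(t)$ acquires rapid wiggles that $x(t)$ did not possess. To pin down the amplitude claim of the theorem, I would apply the triangle inequality together with $|\sin(\psi_n/2)| \le |\psi_n|/2$ to get
\[
|\Delta(t)| \;\le\; \sum_{n} 2 A_n |\sin(\psi_n/2)| \;\le\; \sum_{n} A_n |\psi_n|,
\]
which is exactly the statement that the ringing amplitude is restricted jointly by the amplitude spectrum $\{A_n\}$ and the absolute phase shifts $\{|\psi_n|\}$, matching the bound informally asserted in the theorem and invoked later as Equation~(\ref{eq:constraint}) in the main text.

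The main obstacle will be formalizing the jump from ``$\Delta(t)$ is a nontrivial trigonometric polynomial'' to ``$x_{mr}(t)$ has strictly more local extrema than $x(t)$'', since in principle a small additive perturbation could merely displace existing extrema rather than create new ones. The cleanest way to close this gap is to examine the derivative $x_{mr}'(t) = x'(t) + \Delta'(t)$ and lower bound the number of additional sign changes contributed by $\Delta'(t)$ on each analysis window, for instance by invoking a Bernstein-type inequality on trigonometric polynomials to compare the oscillation rate of $\Delta'$ to that of $x'$. The algebraic steps above are essentially routine; the delicate ingredient is this harmonic-analysis argument that converts nonvanishing high-frequency residual content into a guaranteed surplus of local extrema, which is exactly the phenomenon the time-domain defense in Section~\ref{sec::timedomain} exploits.
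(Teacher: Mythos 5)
Your proposal follows essentially the same route as the paper's proof: the identical sum-to-product decomposition yielding the residual $o(t) = 2\sum_{n} A_{n}\sin(\psi_{n}/2)\cos(2\pi f_{n} t + \varphi_{n} + \psi_{n}/2)$, and the identical triangle-inequality bound $\sum_{n} A_{n}\left|\psi_{n}\right|$ on its amplitude. The only divergence is that you explicitly flag the passage from ``nonzero oscillatory residual'' to ``a surplus of local extrema'' as a gap needing a Bernstein-type argument, whereas the paper closes this step by simply asserting that $x(t)$ is statistically smooth and identifying the additive oscillation itself with the ringing artifact --- so on that point you are, if anything, more careful than the source.
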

%---------------------------------------------------------------------------
\begin{proof}

\newcommand*{\dif}{\mathop{}\!\mathrm{d}}

Suppose there is a small phase shift $\dif \varphi$ in the $N$-th frequency component of the signal $x(t)$, while other frequency components remain unchanged. The new signal would be 

\begin{equation}
\begin{aligned}
x'(t) &= \sum_{n \neq N} A_{n} \cdot \sin(2 \pi f_{n} + \varphi_{n}) + A_{N} \cdot \sin(2 \pi f_{N} + \varphi_{N} + \dif \varphi)\\
& = \sum_{n} A_{n} \cdot \sin(2 \pi f_{n} + \varphi_{n}) + A_{N} \cdot \sin(2 \pi f_{N} + \varphi_{N} + \dif \varphi)\\
& \quad - A_{N} \cdot \sin(2 \pi f_{N} + \varphi_{N})\\
& = x(t) + 2 \cdot A_{N} \cdot \sin(\frac{\dif \varphi}{2}) \cdot \cos(2 \pi f_{N} + \varphi_{N} + \frac{\dif \varphi}{2})\\
& = x(t) + C \cdot \cos(2 \pi f_{N} + \varphi_{N} + \frac{\dif \varphi}{2})\\
& = x(t) + o_{N}(t).
\end{aligned}
\end{equation}

Because $\dif \varphi$ is a very small shift value, $C$ is a small constant that satisfies $\left| C \right| < \left| A_{n} \cdot \dif \varphi \right|$. 

$x(t)$ is an audio signal that is statistically smooth in the time domain. Hence, the new signal $x'(t)$ contains small ringing artifacts because of the additional oscillations signal $o_{N}(t)$ with the frequency of $f_{N}$. The maximum amplitude of the spurious oscillations is limited by $\left| C \right|$ value.

Assume that the phase shifts of a loudspeaker system are denoted as $\psi = \{ \psi_{n} \}$ for all frequency components. The modulated replay signal would be 

\begin{equation}
\begin{aligned}
x_{mr}(t) & = \sum_{n} A_{n} \cdot \sin(2 \pi f_{n} + \varphi_{n} + \psi_{n})\\
& = x(t) + 2 \cdot \sum_{n} A_{n} \cdot \sin(\frac{\psi_{n}}{2}) \cdot \cos(2 \pi f_{n} + \varphi_{n} + \frac{\psi_{n}}{2})\\
& = x(t) + \sum_{n} C_{n} \cdot \cos(2 \pi f_{n} + \varphi_{n} + \frac{\psi_{n}}{2})\\
& = x(t) + o(t).
\end{aligned}
\end{equation}

The total spurious oscillations $o(t)$ can be presented as

\begin{equation}
o(t) = 2 \cdot \sum_{n} A_{n} \cdot \sin(\frac{\psi_{n}}{2}) \cdot \cos(2 \pi f_{n} + \varphi_{n} + \frac{\psi_{n}}{2}).
\end{equation}

The maximum amplitude $A_{o}$ of the spurious oscillations is constraint by the following condition.

\begin{equation}
A_{o} = \sum_{n} \left| C_{n} \right| < \sum_{n} A_{n} \cdot \left| \psi_{i} \right|
\label{eq:constraint}
\end{equation}

As a result, the phase shifts of the loudspeakers will lead to the ringing artifacts in the modulated replay audio.

\end{proof}

\end{appendices}
\begin{appendices}
\section{Parameters in Detection Methods}\label{append:par}

We list the parameters of different replay detection methods here for better understanding the modulated replay attack.

\vspace{0.05in}
\noindent {\bf (1) Constant Q Cepstral Coefficients (CQCC) based method. } 
The Constant-Q Transform (CQT) is applied with a maximum frequency of $F_{max}=f_{s}/2=48kHz$. The minimum frequency is set to $F_{min}=F_{max}/2^{12}=11.7Hz$ (12 is the number of octaves). The value of bins per octave is set to 96. Re-sampling is applied with a sampling period of $d=16$. The dimension of the CQCC features is 19. Experiments were performed with all possible combinations of static and dynamic coefficients.

% fs = 96000;     % The sampling rate
% octave = 9;     % The number of octaves
% B = 96;         % The number of bins per octave.
% fmax = fs/2;    % The CQT is applied with a maximum frequency
% fmin = fmax/(2^octave);
% d = 16;         % Re-sampling is applied with a sampling period
% cf = 19;        % CQCC features dimensions
% ZsdD = 'ZD';

\vspace{0.05in}
\noindent {\bf(2) Mel Frequency Cepstral Coefficents (MFCC) based method. }
The window length is set to 3072 samples (32 ms), and the window shift is 1536 samples (16 ms). Thus, the frequency bins would be 4096 samples. When we create the triangular mel-scale filterbanks, the number of filterbanks is 26. The length of each filter is set to 2049. The sampling rate in experiments is 96 kHz.

% x.Nw = 3072; % window length (samples). This is for a 32 ms window.
% x.Ns = 1536; % window shift (samples). This is for a 16 ms shift.
% x.NFFT = 2^nextpow2(x.Nw); % frequency bins (samples).
% H = melfbank(26, x.NFFT/2 + 1, x.fs); % mel filter banks.
% x = mfcc(x, H);

\vspace{0.05in}
\noindent {\bf(3) Linear Predictive Cepstral Coefficients (LPCC) based method. }
In the LPCC feature, the frame length is set to 1280 and the offset is 0. The threshold of the silence power is $10^{-4}$. The prediction order in the LPC coefficients is set to 14. 

\vspace{0.05in}
\noindent {\bf(4) Mel Wavelet Packet Coefficients (MWPC) based method. }
MWPC feature is based on wavelet packet transform, adapted to the mel scale. Instead of using the energy of the frequency sub-bands, MWPC use Teager Keiser Energy (TKE) Operator as the following equation, $\Psi(s(t)) = s(t)^2 - s(t-1)s(t+1)$. The dimension of MWPC features is 12, derived from the principle component analysis.  

\vspace{0.05in}
\noindent {\bf(5) High-frequency sub-band power based method. } High frequency energy ratio is measured between (2-4) kHz and (0-2) kHz.

\vspace{0.05in}
\noindent {\bf(6) High-frequency CQCC based method. }
Similar to CQCC-based methods. But it concerns the high-frequency (2-4kHz) band.

\vspace{0.05in}
\noindent {\bf(7) FM-AM based method. } 
This method aim to detect the frequency modulation (FM) and amplitude modulation (AM) features in replay audio. Here, the feature vector consists of the modulation centroid frequency (MCF) and modulation static energy (MSE). Which are both extracted from modulation spectrum. The Gaussian mixture model (GMM) is employed as the back-end classifier.

\vspace{0.05in}
\noindent {\bf(8) Sub-bass Frequency based method. }
Energy balance metric indicates the energy ratio of the sub-bass range (20-80 Hz) to the low-frequency range (20-250 Hz). The threshold is set to 0.228 according to the study~\cite{subbass}.

\end{appendices}
\begin{appendices}
\section{Inverse Filter Implementation}\label{append:spk}

The speaker response estimation process contains two steps: discrete amplitude response measurement and continuous amplitude response fitting. 
% Discrete estimation
In the discrete amplitude response measurement, we measure the speaker input/output response coefficient by testing 68 discrete typical frequency values. The discrete frequency values are within four audio frequency ranges: bass (from 60 Hz to 225 Hz with a spacing of 15 Hz), low midrange (from 250 Hz to 500 Hz with a spacing of 50 Hz), midrange (from 550 Hz to 2 kHz with a spacing of 50 Hz), and upper midrange (from 2.1 kHz to 4 kHz with a spacing of 100 Hz). {The input test signals are single-frequency signals with the same amplitude of 1, which are generated by using the \emph{wavwrite} tool and stored in a lossless format. The test audio is then transferred to replay devices and played at medium volume on loudspeakers, since the response function is not directly related to the input amplitude according to our experiments.} After the spectrum analysis, we can get a rough response polygonal curve across 68 discrete points. 

% Continuous Fitting
In the finer-grained amplitude response fitting, we need to first calculate the spectral resolution of the modulated signal $\Delta f = f_{s} / N$, where $f_{s}$ is the signal sampling rate. $N$ is the FFT point number which is the minimum power of 2 that is greater than or equal to the signal length $L$, denoted as $N = 2 ^ { \lceil \log_{2} L \rceil }$. The finer-grained amplitude response curve can be achieved by the cubic spline fitting. And the estimated response used in the inverse filter generation is sampled with the signal spectral resolution $\Delta f$. The inverse filter is designed by using the finer-grained speaker response $H(k)$. In order to avoid divide-by-zero error in our experiments, the inverse filter transfer function is calculated as $1/(H(k)+eps)$, where $eps$ is a small value from 0.001 to 0.002.

\begin{figure}[t]
\vspace{-0.1in}
\centering
    \subfloat[iPhone X]{
        \includegraphics[width=1.6in]{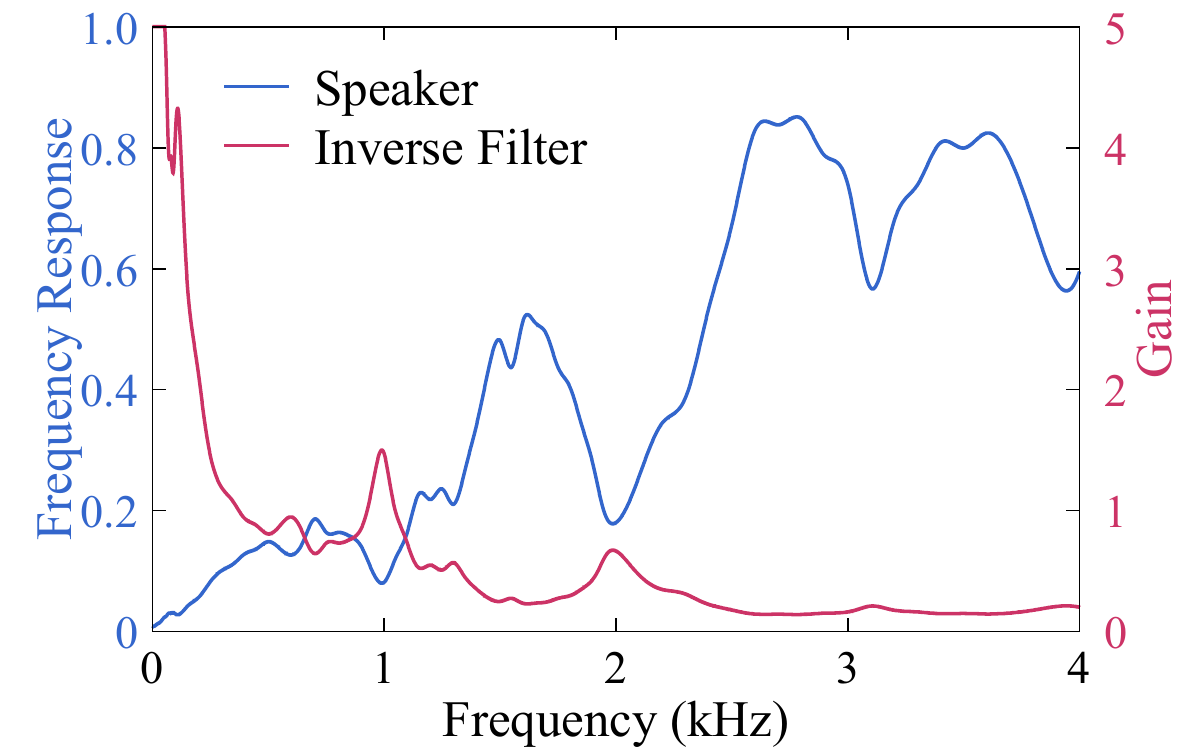}
        }% 
    \subfloat[iPad Pro]{
        \includegraphics[width=1.6in]{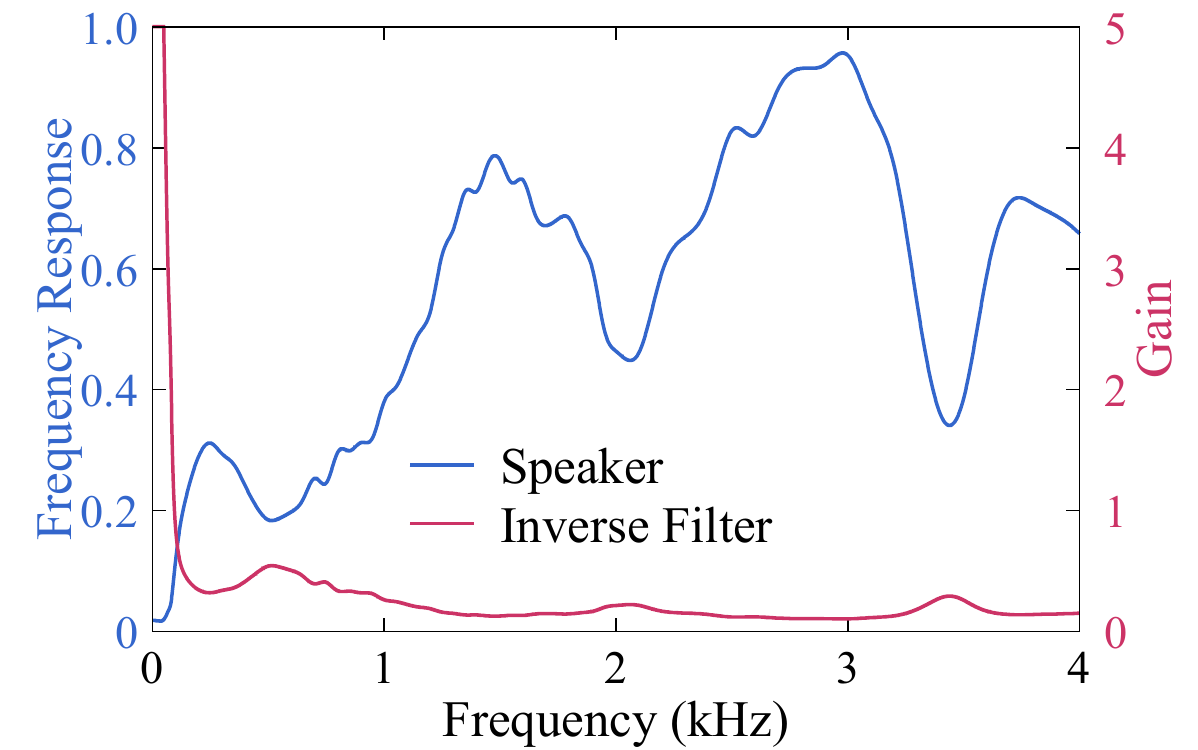}
        }% 
    \hfil
    \subfloat[Mi Phone 4]{
        \includegraphics[width=1.6in]{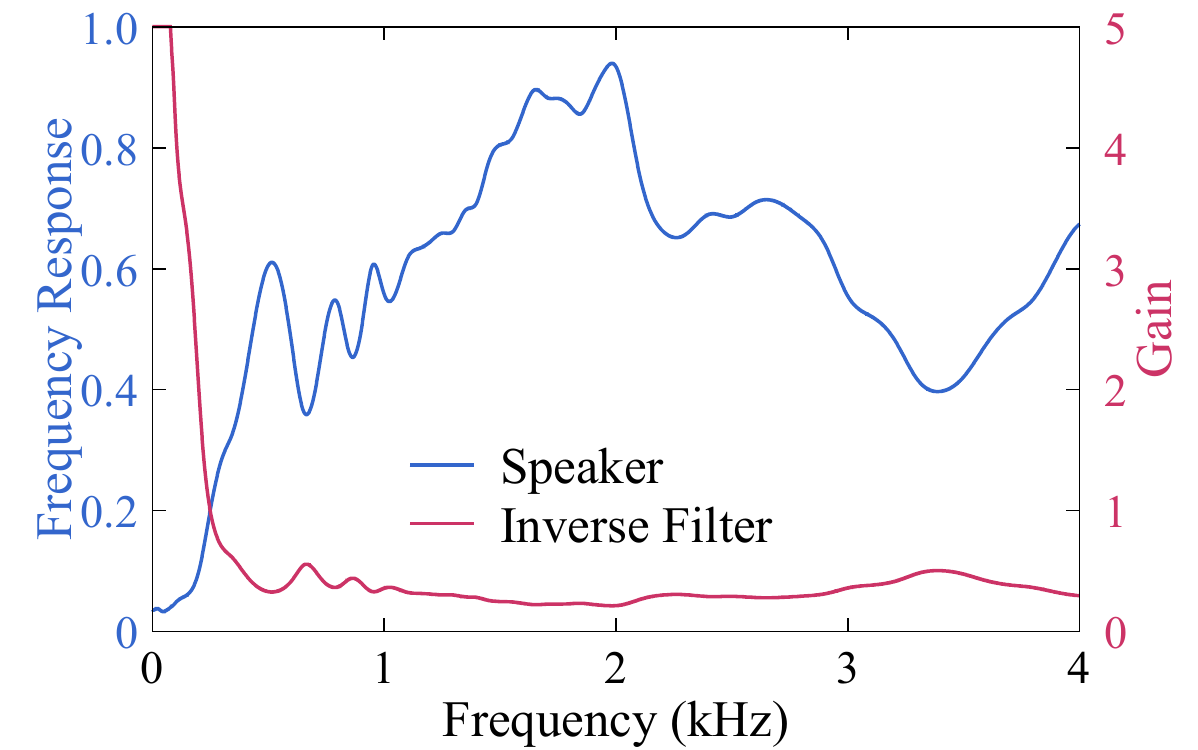}
        }%
    \subfloat[Google Nexus 5]{
        \includegraphics[width=1.6in]{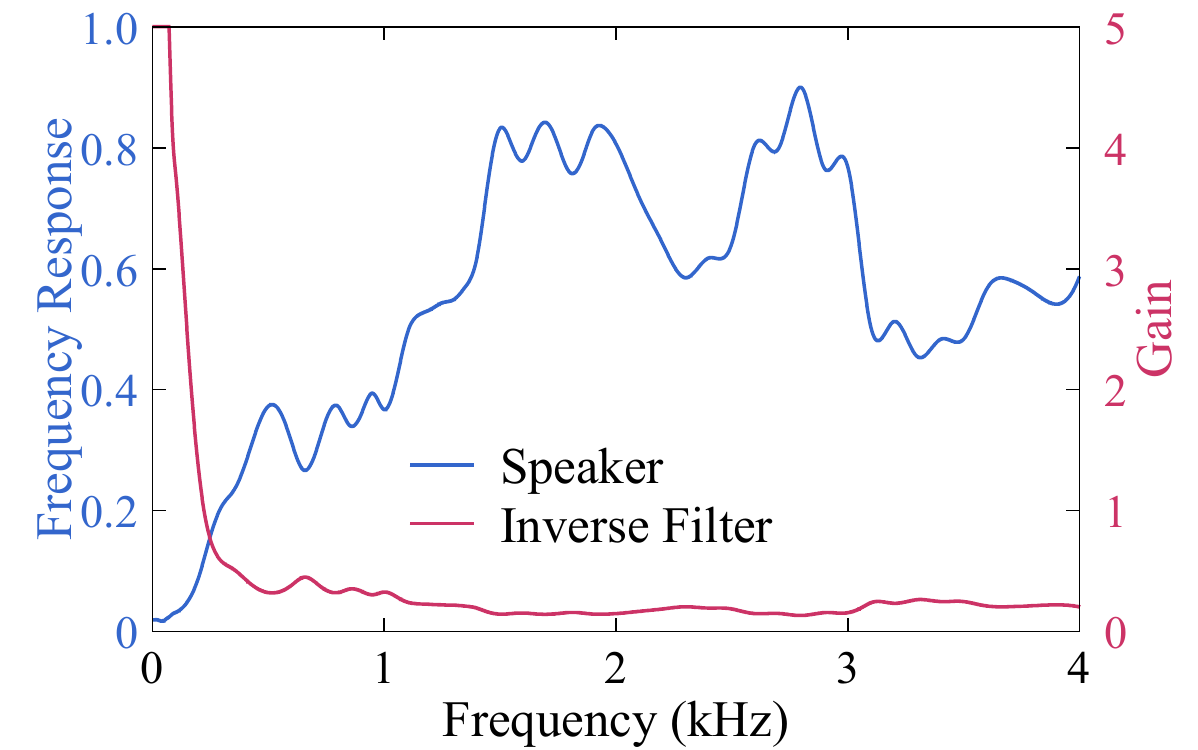}
        }%  
    \hfil
    \subfloat[Bose Soundlink Micro]{
        \includegraphics[width=1.6in]{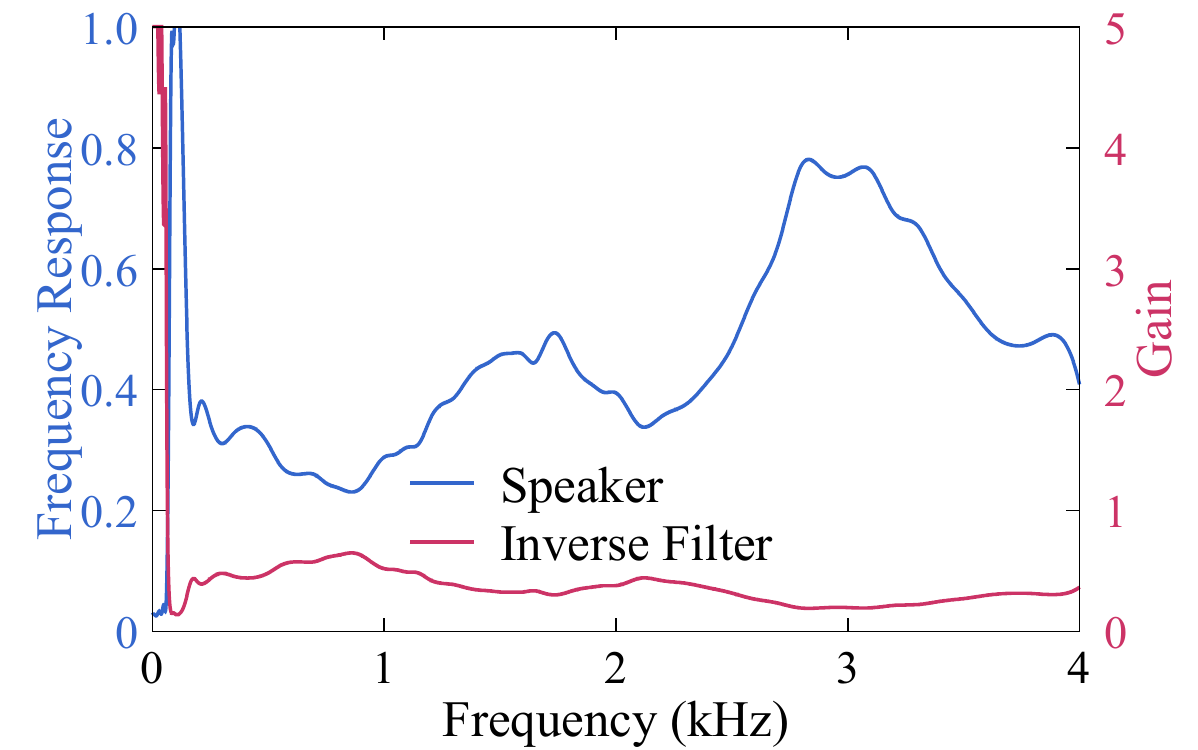}
        }%
    \subfloat[Samsung Smart TV]{
        \includegraphics[width=1.6in]{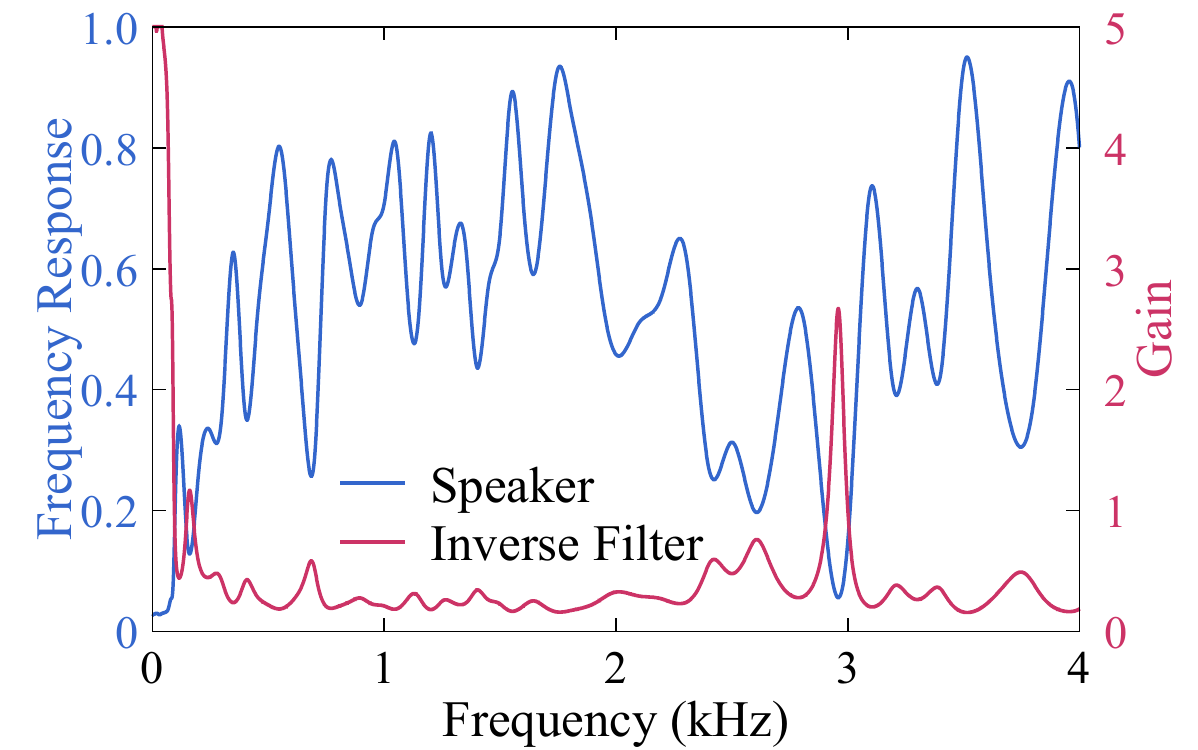}
        }
    \centering
    \caption{The amplitude response curves of different speaker devices and their corresponding inverse filters.}
    \label{fig_exp0}
\vspace{-0.15in}
\end{figure}

Figure~\ref{fig_exp0} shows the amplitude response curves of different speaker devices and their inverse filters. For mobile devices, the response curves are high-pass filters due to the limited size of speakers. Therefore, the inverse filters should be low-pass filters. For Bose Soundlink Micro which has a tweeter and a woofer, there are obvious two-stage enhancements in the  amplitude response. However, the transfer function still cannot be considered as a pass-through filter. The frequency response of Samsung Smart TV fluctuates with frequency due to its two speakers  that create stereo audio. We can use designed inverse filters to compensate the speaker amplitude response, mitigating the decay of frequency components.

\end{appendices}
\begin{appendices}
\section{Classifiers in Time-Domain Defense}\label{append:classifier}

In the time-domain defense, the local extreme ratio (LER) is a robust feature that can describe the ringing artifacts in modulated replay audios. Therefore, the classifier selection has little impact on the defense  performance. To verify this hypothesis, we conduct experiments to evaluate the effects of different classifiers on the feature classification.

We classify the LER features using five common classifiers, including Support Vector Machine (SVM), Decision Tree (DT), Naive Bayes (NB), Gaussian Mixture Model (GMM), and K-Star. The 10-fold cross-validation accuracy is used as the evaluation standard. The performance of different classifiers is shown in Figure~\ref{fig_classselect}. We can see that SVM, Decision Tree, and KStar achieve better performance than other classifiers. Gaussian Mixture Model obtains the worst accuracy since the data distribution of LER features does not subject to the normal distribution. Above all, we choose the SVM model in our system due to its easy deployment and high performance.

\begin{figure}[t]
  \centering
  \includegraphics[width=2.5in]{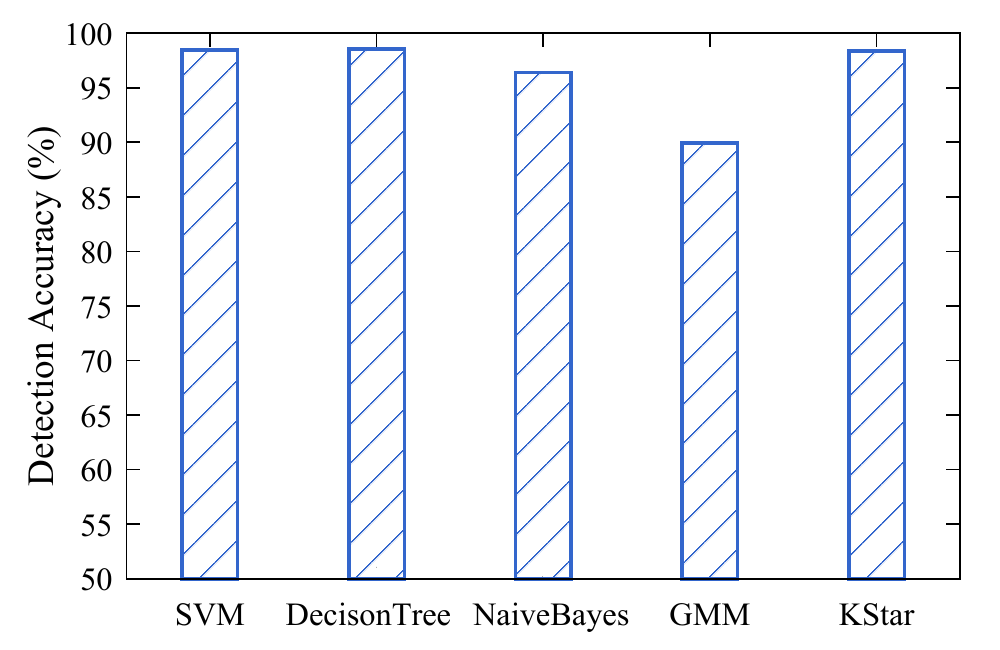}
  \vspace{-0.1in}
  \caption{Performance of different classifiers in the time-domain defense.}
  \vspace{-0.1in}
  \label{fig_classselect}
\end{figure}

\end{appendices}
%\input{AppendixF.tex}
%-----------------------------------------------------------------------------------------

\end{document}